\documentclass{amsart}


\usepackage[all]{xy}


\theoremstyle{definition}
\newtheorem{theorem}{Theorem}[section]

\theoremstyle{remark}
\newtheorem{rem}[theorem]{Remark}
\newtheorem*{acknowledgment}{Acknowledgments}

\DeclareMathOperator{\C}{\mathbb{C}}
\DeclareMathOperator{\Q}{\mathbb{Q}}
\DeclareMathOperator{\N}{\mathbb{N}}
\DeclareMathOperator{\Z}{\mathbb{Z}} 




\newcommand{\seq}[1]{{\boldsymbol{#1}}}
 
\DeclareMathOperator{\id}{id} 
\DeclareMathOperator{\sgn}{sgn}



\title{Quiver mutation loops and partition $q$-series}

\author{Akishi~Kato}%
\address{Graduate School of Mathematical Sciences, 
The University of Tokyo, 
3-8-1 Komaba, Meguro-ku, Tokyo 153-8914, Japan.}
\email{akishi@ms.u-tokyo.ac.jp}

\author{Yuji~Terashima}
\address{Graduate School of Information Science and Engineering, 
Tokyo Institute of Technology, 
2-12-1 Ookayama, Meguro-ku, Tokyo 152-8550, Japan.}
\email{tera@is.titech.ac.jp}


\numberwithin{equation}{section}

\begin{document}

\begin{abstract}
A quiver mutation loop is a sequence of mutations and vertex
relabelings, along which a quiver transforms back to the original
form. For a given mutation loop $\gamma$, we introduce a quantity called
a \emph{partition $q$-series} $Z(\gamma)$ which takes values in
$\N[[q^{1/\Delta}]]$ where $\Delta$ is some positive integer. The
partition $q$-series are invariant under pentagon moves. If the quivers
are of Dynkin type or square products thereof, they reproduce so-called
fermionic or quasi-particle character formulas of certain modules
associated with affine Lie algebras. They enjoy nice modular properties
as expected from the conformal field theory point of view.
\end{abstract}

\maketitle

\section{Introduction}

Quiver mutations are now ubiquitous in many branches of mathematics and
mathematical physics, such as Donaldson-Thomas theory, low dimensional
topology, representation theory, quantum field theories. Quiver
mutations are now recognized as important tools, along with cluster
algebras.

The main purpose of this paper is to introduce quantities called
\emph{partition $q$-series} directly at the level of quiver mutation
sequences. The definition requires only combinatorial data of quivers
and mutation sequences, and completely independent of the details of the
problem. In fact, one motivation is to provide a solid mathematical
foundation to extract essential information of the partition function
of a $3$-dimensional gauge theory associated with a sequence of quiver
mutations which is introduced in \cite{TY}.  It is hoped that a deeper
understanding of the partition $q$-series will help uncover the hidden
combinatorial structure and shed new light on the mystery of
quantization.

A quiver mutation loop is a sequence of mutations and vertex
relabelings along which a quiver transforms back to the original
form. For a given mutation loop $\gamma$, we associate a quantity called
a \emph{partition $q$-series} $Z(\gamma)$ which takes values in
$\N[[q^{1/\Delta}]]$ where $\Delta$ is some positive integer.  The
partition $q$-series are closely related to the quantum dilogarithms,
and satisfy various invariance properties such as pentagon relations. If
the quivers are of Dynkin type or square products thereof, they
reproduce so-called fermionic or quasi-particle character formulas
of certain modules associated with affine Lie algebras. They enjoy nice
modular properties as expected from the conformal field theory point of
view.

The paper is organized as follows. In Section \ref{sec:BG}, we recall
the basic definitions of quiver mutations. In Section \ref{sec:Zdef}, we
introduce the partition $q$-series $Z(\gamma)$ for the mutation loop
$\gamma$. Since the definition is a slightly complicated, we supplied a
few simple examples. In Section \ref{sec:Invariance}, we introduce the
notion of ``pentagon move'' of mutation loops, and show that the
partition $q$-series are invariant under such moves. In the final
Sections \ref{sec:ZI} and \ref{sec:ZII}, we treat the quivers of
simply-laced Dynkin type or square products thereof. It is demonstrated
that if we choose a special mutation loop, the associated partition
$q$-series are nothing but the ``fermionic character formulas'' of
certain modules associated with affine Lie algebras. Up to
multiplication by appropriate powers of $q$, they are conjectured to be
modular forms with respect to a certain congruence subgroup of
$SL(2,\Z)$, as expected from the conformal field theory point of view.

The paper \cite{Cecotti2010} proposes a relation between four-dimensional
gauge theories and parafermionic conformal field theories. In particular,
they claim that the $L^2$-trace of the half monodromy is written in
terms of characters. It would be interesting to find a precise relation
with their work.
\begin{acknowledgment}
We would like to thank H. Fuji, K. Hikami, T. Kitayama, A. Kuniba,
K. Nagao, J. Suzuki, S. Terashima and M. Yamazaki for helpful
discussion. This work was partially supported by Japan Society for the
Promotion of Science (JSPS), Grants-in-Aid for Scientific Research Grant
(KAKENHI) Number 23654079 and 25400083.
\end{acknowledgment}

\section{Backgrounds} 
\label{sec:BG}

\subsection{Quivers and mutations} 

A \emph{quiver} $Q$ is an oriented graph given by a set of vertices
$Q_0$, a set of arrows $Q_1$ and two maps ``source'' $s: Q_1 \to Q_0$
and ``target'' $t: Q_1 \to Q_0$. A quiver $Q$ is \emph{finite} if the
sets $Q_0$ and $Q_1$ are finite.  Throughout this paper, we will assume
all quivers are finite, and an isomorphism
$Q_{0}\stackrel{\sim}{\rightarrow} \{1,\dots,n\}$, called
\emph{labeling}, is fixed.

Let $Q$ be a quiver.  A \emph{loop} or \emph{$1$-cycle} of $Q$ is an
arrow $\alpha$ whose source and target coincide. A \emph{$2$-cycle} of
$Q$ is a pair of distinct arrows $\beta$ and $\gamma$ such that
$s(\beta)=t(\gamma)$ and $t(\beta)=s(\gamma)$:
\begin{equation*}
 \text{loop} \quad \vcenter{
  \xymatrix @R=8mm @C=8mm
  @M=2pt{\bullet \ar@(ur,dr)[]}}  \qquad \qquad
  \text{2-cycle}\quad \vcenter{ \xymatrix @R=8mm @C=8mm @M=2pt{\bullet
  \ar@/^/[r] & \bullet \ar@/^/[l]} }
\end{equation*}
In this paper, we treat quivers without loops or 2-cycles.  For a
quiver $Q$, $Q^{op}$ denotes the quiver obtained from $Q$ by reversing
all arrows.

A \emph{Dynkin quiver} is a quiver $Q$ whose underlying graph
$\underline{Q}$, a graph obtained by forgetting the orientation of
arrows, is a Dynkin diagram.

A vertex $i$ of a quiver is a \emph{source} (respectively, a
\emph{sink}) if there are no arrows $\alpha$ with target $i$
(respectively, with source $i$).  A quiver is \emph{alternating} if each
of its vertices is a source or a sink. For an alternating graph $Q$, the
\emph{sign} of the vertex $i$ is defined as $\sgn(i)=1$ if $i$ is a
source and $\sgn(i)=-1$ if $i$ is a sink. Here are some examples of
alternating Dynkin quivers:
\begin{equation*}
 \begin{array}{ccccc}
  A_{6} && D_{6} && E_{6}\\
  \vcenter{\xymatrix @R=2mm @C=3mm @M=2pt{
   1 \ar[r]  &2 & 3 \ar[l]\ar[r] & 4 & 5  \ar[l]\ar[r] & 6
   }}
   && 
   \vcenter{\xymatrix @R=2mm @C=3mm @M=2pt{
   &&&& 5 \ar[dl] \\
  1 \ar[r]  &2 & 3 \ar[l]\ar[r] & 4  \\
  &&&&6 \ar[ul]  }}
   &&
   \vcenter{\xymatrix @R=4mm @C=3mm @M=2pt{
   && 6 \\
  1 \ar[r]  &2 & 3 \ar[l]\ar[r]\ar[u] & 4 & 5 \ar[l]
   }}
 \end{array}
\end{equation*}

For a quiver $Q$ and its vertex $k$, the \emph{mutated quiver}
$\mu_k(Q)$ is defined \cite{Fomin2002}: it has the same set of vertices
as $Q$; its set of arrows is obtained from that of $Q$ as follows:
\begin{itemize}
 \item[1)] for each path $i\to k\to j$ of length two,
	 add a new arrow 
	 $i\to j$;
 \item[2)] reverse all arrows with source or target $k$;
 \item[3)] remove the arrows in a maximal set of pairwise
disjoint $2$-cycles.
\end{itemize}
The following two quivers are obtained from each other by mutating at
the black vertex
\begin{equation*}
\vcenter{\begin{xy} 0;<2pt,0pt>:<0pt,-2pt>:: 
(20,8) *+<2pt>{\circ} ="0",
(4,20) *+<2pt>{\circ} ="1",
(16,20) *+<2pt>{\circ} ="2",
(0,8) *+<2pt>{\bullet} ="3",
(10,0) *+<2pt>{\circ} ="4",
"0", {\ar"2"},
"3", {\ar"0"},
"4", {\ar"0"},
"2", {\ar"1"},
"1", {\ar"3"},
"3", {\ar"2"},
"4", {\ar"3"},
\end{xy}}
\quad
\Longleftrightarrow
\quad
\vcenter{
\begin{xy} 0;<2pt,0pt>:<0pt,-2pt>:: 
(20,8) *+<2pt>{\circ} ="0",
(4,20) *+<2pt>{\circ} ="1",
(16,20) *+<2pt>{\circ} ="2",
(0,8) *+<2pt>{\bullet} ="3",
(10,0) *+<2pt>{\circ} ="4",
"1", {\ar"0"},
"0", {\ar"2"},
"0", {\ar"3"},
"4":{\ar@/^2pt/"0"},
"4":{\ar@/_2pt/"0"},
"3", {\ar"1"},
"2", {\ar"3"},
"4", {\ar"2"},
"3", {\ar"4"},
\end{xy}}\quad .
\end{equation*}
There is a bijection
\begin{equation*}
\left\{
 \text{
 \parbox[c]{.35\textwidth}{the quivers without loops or $2$-cycles, 
 $Q_{0}= \{1, \ldots, n\}$}
}
\right\}
\longleftrightarrow
\left\{
 \text{
 \parbox[c]{.35\textwidth}{the skew-symmetric integer $n\times
 n$-matrices $B=(b_{ij})$}
}
\right\}
\end{equation*}
by 
\begin{equation}
b_{ij}=\#\{(i {\to} j) \in Q_{1}\}-\#\{(j {\to} i) \in Q_{1}\}.
\end{equation}
The above operation of quiver mutation corresponds to matrix mutation
defined by Fomin-Zelevinsky \cite{Fomin2002}.  The matrix $B'$
corresponding to $\mu_k(Q)$ is given by \cite{Fomin2007}
\begin{equation} \label{eq:matrix-mutation}
b'_{ij} =\left\{ \begin{array}{ll} -b_{ij} & \text{if $i=k$ or $j=k$} \\
b_{ij}+\sgn(b_{ik}) \max(b_{ik}b_{kj},0) & \text{otherwise.}\end{array} \right.
\end{equation}

\subsection{Mutation sequences and mutation loops}
\label{sec:mu-seq-loop}

A finite sequence of vertices of $Q$, $\seq{m}=(m_{1},m_{2},\dots,m_{T})$
is called \emph{mutation sequence}. This can be regarded as a (discrete)
time evolution of quivers:
\begin{equation*}
 Q(0):=Q,\qquad Q(t) := \mu_{m_{t}}(Q(t-1)),  \qquad (1\leq t\leq T).
\end{equation*}
$Q(0)$ and $Q(T)$ are called the \emph{initial} and the \emph{final}
quiver, respectively. 

Suppose further that $Q(0)$ and $Q(T)$ are isomorphic,
namely, the composed mutation
\begin{equation}
 \mu_{\seq{m}}:= \mu_{m_{T}}\circ \dots \circ \mu_{m_{2}} \circ \mu_{m_{1}}
\end{equation}
transforms $Q$ into a quiver isomorphic to $Q$. An isomorphism $\varphi
: Q(T)\to Q(0)$ regarded as a bijection on the set of vertices, is
called \emph{boundary condition} of the mutation sequence $\seq{m}$.
Using the fixed labeling $Q_{0}\stackrel{\sim}{\rightarrow}
\{1,\dots,n\}$, we represent $\varphi$ by an element in the symmetric
group $S_n$. The triple $\gamma=(Q;\seq{m},\varphi)$ is called a
\emph{mutation loop}.

\section{Partition $q$-series and their examples}
\label{sec:Zdef}

In this section, we introduce a quantity called partition $q$-series
$Z(\gamma)$ for a mutation loop $\gamma$, in the same spirit as
partition functions of statistical mechanics. Roughly speaking,
$Z(\gamma)$ is defined as a sum of weights over all possible states,
while the weights are expressed as a product of local factors. For
clarity's sake, sample computations of $Z(\gamma)$ are presented.

\subsection{Definition of partition $q$-series}

Let $\gamma=(Q;\seq{m},\varphi)$ be a mutation loop.
We first introduce a family of \emph{$s$-variables} $\{s_{i}\}$ and
\emph{$k$-variables} $\{k_{t}\}$ as follows.
\begin{itemize}
 \item[(i)] An ``initial'' $s$-variable $s_{v}$ is attached to each
	    vertex $v$ of the initial quiver $Q$.

 \item[(ii)] Every time we mutate at vertex $v$, we add a ``new''
	    $s$-variable associated with $v$. We often use $s_{v}$,
	    $s_{v}'$, $s_{v}''$, $\dots$ to distinguish $s$-variables
	    attached to the same vertex.

 \item[(iii)] We associate a $k$-variable $k_{t}$ with each mutation at
	    $m_{t}$.
	      
 \item[(iv)] If two vertices are related by a boundary condition, then
	    the corresponding $s$-variables are identified.
\end{itemize}
As we will soon see, the $s$- and $k$-variables are not considered
independent; we impose a linear relation for each mutation step. We also
define a \emph{weight} of each mutation as a function of these
variables.

Suppose that the quiver $Q(t-1)$ equipped with $s$-variables $\{s_{i}\}$
is mutated at vertex $v=m_{t}$ to give $Q(t)$. Then $k$- and
$s$-variables are required to satisfy 
\begin{equation}
 \label{eq:k-and-s}
 k_{t} = s_{v}+s'_{v} -\sum_{a\to v}s_{a}.
\end{equation}
Here, $s'_{v}$ is the ``new'' $s$-variable attached to mutated vertex
$v$, and the sum is over all the arrows of $Q(t-1)$ whose target vertex is $v$.

The \emph{weight of the mutation} $\mu_{m_{t}}:Q(t-1)\to Q(t)$ at
$v=m_{t}$ is defined as
\begin{equation}
 \label{eq:W-def}
  W(m_{t}):=
  \frac{q^{\frac{1}{2} 
   \left(s_{v}+s'_{v} -\sum_{a\to v}s_{a}\right)
   \left(s_{v}+s'_{v} -\sum_{v\to b}s_{b}\right)}
   }{(q)_{s_{v}+s'_{v} -\sum_{a\to v}s_{a}}}
   =
  \frac{q^{\frac{1}{2} k_{t}   \left(s_{v}+s'_{v} -\sum_{v\to b}s_{b}\right)}
   }{(q)_{k_{t}}},
\end{equation}
where 
\begin{equation}
  (x;q)_{n} := \prod_{k=0}^{n-1}(1-x q^{k}),\quad\qquad (q)_{n}:=(q;q)_{n}
\end{equation}
is the $q$-Pochhammer symbol. 

The \emph{weight of the mutation loop} $\gamma$ is then defined as the
product  over all mutations,
\begin{equation}
 W(\gamma) = \prod_{t=1}^{T} W(m_{t}).
\end{equation}
Clearly $W(\gamma)$ has a structure
\begin{equation}
 \label{eq:W-seq} W(\gamma) = \frac{q^{H(\seq{s})}}{
  \prod_{t=1}^{T} (q)_{k_{t}}} 
\end{equation}
where $H(\seq{s})$ is a quadratic form in $s$-variables. 

For example, suppose we mutate the following quiver at vertex
$v=m_{t}$. (All arrows not incident on $v$ are omitted.)
\begin{equation*}
 \xygraph{!{<0cm,0cm>;<20pt,0cm>:<0cm,20pt>::}
!{(-1.5,0.8) }*+{a_{1}}="a1"
!{(0,1.6) }*+{a_{2}}="a2"
!{(1.5,0.7)}*+{a_{3}}="a3"
!{(0,0) }*+{v}="v"
!{(-1,-1.4) }*+{b_{1}}="b1"
!{(1,-1.4) }*+{b_2}="b2"
"a1":"v"   "a2":"v"  "v":"b1"  "v":"b2"
"a3":@/_5pt/"v"   "a3":@/^5pt/"v" 
}
\end{equation*}
In this case, the relation \eqref{eq:k-and-s} reads 
\begin{equation*}
 k_{t}=s_{v}+s'_{v}-s_{a_1}-s_{a_2}-2s_{a_3}
\end{equation*}
and the weight of the mutation is 
\begin{equation*}
 W(m_{t})
  =\frac{q^{\frac{1}{2} 
  (s_{v}+s'_{v}-s_{a_1}-s_{a_2}-2s_{a_3})  
  (s_{v}+s'_{v}-s_{b_1}-s_{b_2})}}
  {(q)_{s_{v}+s'_{v}-s_{a_1}-s_{a_2}-2s_{a_3}}}
  =\frac{q^{\frac{1}{2} 
  k_{t}  
  (s_{v}+s'_{v}-s_{b_1}-s_{b_2})}}
  {(q)_{k_{t}}}.
\end{equation*}
Note that both the linear relations \eqref{eq:k-and-s} and the weight
\eqref{eq:W-def} uses only the local information around the mutating
vertex.

The relation \eqref{eq:k-and-s} allows us to express each $k$-variable
as a $\Z$-linear combination of $s$-variables. If these relations are
invertible as a whole, namely, if one can express each $s$-variable as a
$\Q$-linear combination of $k$-variables, then, the mutation loop
$\gamma$ is called \emph{nondegenerate}.\footnote{It is not easy to
decide whether or not $\gamma$ is nondegenerate, just looking ``local''
structure of $\gamma$. However, the following remark is in order.  For a
mutation loop with $T$ mutations, the number of $k$-variables is $T$ by
the rule (iii). The number of independent $s$-variables is also $T$; we
start with $\#Q_{0}$ of ``initial'' $s$-variables (i), add $T$ ``new''
ones (ii), but $\#Q_{0}$ of $s$-variables are identified via the
boundary condition $\varphi$ (iv). So there is a good chance of $\gamma$
being nondegenerate.} Suppose the mutation loop $\gamma$
is nondegenerate. Then the quadratic form $H(\seq{s})$ in
\eqref{eq:W-seq} can be expressed as a quadratic form
$F(\seq{k})$ in $k$-variables:
\begin{equation}
 \label{eq:W-seq-k} W(\gamma) = \frac{q^{F(\seq{k})}}
  {\prod_{t=1}^{T} (q)_{k_{t}}}. 
\end{equation}
Note that there is a positive integer $\Delta$ such that $\Delta
F(\seq{k})\in \Z$ for all $\seq{k}\in \Z^{T}$. The mutation loop
$\gamma$ is called \emph{positive}, if $F(\seq{k})>0$ for all
$\seq{k}\in \N^{T}$, $\seq{k}\neq 0$, where
$\N=\{0,1,2,\cdots\}$. This condition assures finiteness of the set
$\{\seq{k}\in \N^{T}~|~ F(\seq{k})=n\}$ for all $n\in \Z$.

Now we are ready to define $Z(\gamma)$. From now on, mutation loops are
assumed to be nondegenerate and positive. Let $\gamma$ be a mutation
loop with $T$ mutations. We define its \emph{partition $q$-series}
$Z(\gamma)$ by the multiple sum
\begin{equation}
 \label{eq:Z-def} Z(\gamma)= \sum_{k_{1},\dots,k_{T}=0}^{\infty}
  W(\gamma) \quad \in \N[[q^{1/\Delta}]].
\end{equation}

\begin{rem}
 \label{rem:k-vee}
 Occasionally it is convenient to introduce another set of variables,
 $k^{\vee}$-variables $\{k^{\vee}_{t}\}$ (see e.g. the proof of Theorem
 \ref{thm:Z(Q,Q')-multisum} below).  These are ``orientation reversed''
 version of $k$-variables, and the linear relation now reads
 \begin{equation}
 \label{eq:kv-and-s}
 k^{\vee}_{t} = s_{v}+s'_{v} -\sum_{v\to b}s_{b}.
 \end{equation}
Then, the weight of mutation \eqref{eq:W-def} is expressed as
\begin{equation}
 \label{eq:W-def-2}
  W(m_{t}) = \frac{q^{\frac{1}{2}k_{t}k^{\vee}_{t}}}{(q)_{k_{t}}}.
\end{equation}
\end{rem}

\subsection{Example 1 --- $A_{3}$ quiver}
\label{sec:A3}

We illustrate how to compute the partition $q$-series 
using the quiver of type $A_{3}$
\begin{equation*}
 Q= \xymatrix @R=6mm @C=6mm @M=4pt{ 1 \ar[r] & 2 & 3 \ar[l] } \strut
\end{equation*}
and a mutation loop 
\begin{equation}
 \gamma=(Q;\seq{m},\varphi),\qquad \seq{m}=(2,1,3),\qquad \varphi=\id.
\end{equation}
We label the $s$- and $k$-variables as follows:\footnote{Here the
$k$-variables are indexed by vertex labels instead of mutation order;
this is in accordance with the convention of Section \ref{sec:ZI}.}
\begin{equation}
 \label{eq:A3-loop}
 \xymatrix@R=6mm @C=6mm @M=4pt{ Q(0) \ar[d]_{\textstyle \mu_{2}} & s_1
  \ar[r] \ar@{=}[d] & s_2 \ar@{.>}[d]^{\textstyle k_{2}}& s_3
  \ar[l]\ar@{=}[d] 
\\ 
Q(1) \ar[d]_{\textstyle \mu_{1}}&
  s_1\ar@{.>}[d]^{\textstyle k_{1}} & s'_2 \ar[l]\ar[r]\ar@{=}[d]&
  s_3\ar@{=}[d] 
\\ 
Q(2) \ar[d]_{\textstyle \mu_{3}}& s'_1\ar[r]\ar@{=}[d]
  & s'_2 \ar[r]\ar@{=}[d]& s_3\ar@{.>}[d]^{\textstyle k_{3}} 
\\ 
Q(3) \ar[d]_{\textstyle \id}
&  s'_1\ar[r] \ar@{=}[d]& s'_2 \ar@{=}[d]& s'_3 \ar[l] \ar@{=}[d]
\\ 
Q(0) &  s_1\ar[r] & s_2 & s_3 \ar[l] }
\end{equation}
The relations between $k$- and $s$-variables are
\begin{equation}
 \label{eq:A3-ks}
 k_{2}=s_{2}+s'_{2}-s_{1}-s_{3}, \quad 
 k_{1}=s_{1}+s'_{1}-s'_{2},\quad  
 k_{3}=s_{3}+s'_{3}-s'_{2}.
\end{equation}
Under the boundary conditions $s_i=s'_i$ ($i=1,2,3$), one can solve
\eqref{eq:A3-ks} for $s$-variables:
\begin{equation*}
\begin{split}
 & s_1=s'_{1} =\frac{1}{4} \left(3 k_1+2 k_2+k_3\right),\qquad 
 s_2=s'_{2}= \frac{1}{2} \left(k_1+2 k_2+k_3\right),
\\
& s_3=s'_{3}= \frac{1}{4} \left(k_1+2 k_2+3 k_3\right).
\end{split}
\end{equation*}
So the weight of $\gamma$ takes the following form:
\begin{equation}
 \label{eq:A3-W}
  \begin{split}
   W(\gamma)&=
   \frac{q^{\frac{1}{2}(s_2+s'_{2}-s_1-s_3)(s_2+s'_{2})}}
   {(q)_{s_2+s'_{2}-s_1-s_3}} 
   \frac{q^{\frac{1}{2}(s_1+s'_{1}-s'_2)(s_1+s'_{1})}}
   {(q)_{s_1+s'_{1}-s'_2}} 
   \frac{q^{\frac{1}{2}(s_3+s'_{3}-s'_2)(s_3+s'_{3})}}
   {(q)_{s_3+s'_{3}-s'_2}}
   \\ &
   =\frac{q^{\frac{3}{4}k_1^2+k_1 k_2 +k_2^2 +k_2 k_3
   +\frac{3}{4} k_3^2 +\frac{1}{2}k_3 k_1
  }}{(q)_{k_1} (q)_{k_2} (q)_{k_3}}.
  \end{split}
\end{equation}
Summing over $k$-variables,
we obtain
\begin{equation}
  Z(\gamma)
  =\sum_{k_{1},k_{2},k_{3}=0}^{\infty} 
  \frac{q^{
  \frac{3}{4}k_1^2+k_1 k_2 +k_2^2 +k_2 k_3+\frac{3}{4} k_3^2 +\frac{1}{2}k_3 k_1
  }}{(q)_{k_1}
  (q)_{k_2} (q)_{k_3}}\in \N[[q^{1/4}]].
\end{equation}
Exactly the same formula appeared in the study of coset conformal field
theories \cite{Kedem1993}.  This is an example of partition $q$-series
which we study more systematically in Section \ref{sec:ZI}. $Z(\gamma)$
can be written as
\begin{equation}
  Z(\gamma) =\frac{1}{(q)_{\infty }}\sum_{n\in \Z} q^{\frac{3}{4}n^{2}},
\end{equation}
which reveals that $q^{-\frac{1}{24}}Z(\gamma)$ is a modular function for
a certain congruence subgroup of $SL_{2}(\Z)$.

\subsection{Example 2 --- pentagon identity}
\label{sec:A2-pentagon}

Let us consider a quiver of type $A_{2}$:
\begin{equation*}
 Q=1\longrightarrow 2~.
\end{equation*}
We take up two mutation loops $\gamma$, $\gamma'$, and compare the
associated partition $q$-series.

The first loop we study is 
\begin{equation}
 \label{eq:A2-g1}
 \gamma=(Q;\seq{m},\varphi),\qquad \seq{m}=(1,2),\qquad \varphi=\id.
\end{equation}
The $s$- and $k$-variables are given as follows:
\begin{equation}
\label{eq:A2-loop-1}
 \vcenter{\xymatrix @R=4mm @C=6mm @M=4pt{
  Q(0) \ar[d]_{\textstyle\mu_{1}} & 
   a \ar[r] \ar@{.>}[d]_{\textstyle  k_{1}}
   & b \ar@{=}[d]\\ 
 Q(1) \ar[d]_{\textstyle \mu_{2}} & 
   a' \ar@{=}[d]& b \ar[l]
   \ar@{.>}[d]^{\textstyle k_{2}}\\ 
 Q(2) \ar[d]_{\textstyle \id} & a' \ar[r] \ar@{=}[d]& b' \ar@{=}[d]\\
 Q(0) & a \ar[r] & b \\
}}
\end{equation}
Since there is no incoming arrow on mutating vertices, the relations
\eqref{eq:k-and-s} among $k$- and $s$-variables are simply
\begin{equation}
 \label{eq:A2-ks}
 k_{1}=a+a',\qquad k_{2}=b+b'.
\end{equation}
The initial and the new $s$-variables are identified via
boundary condition $\varphi=\id$:
\begin{equation}
 \label{eq:A2-s-id}
 a=a',\qquad b=b'.
\end{equation}
So there are two $s$-variables $a$, $b$ and two $k$-variables $k_{1}$,
$k_{2}$. Solving \eqref{eq:A2-ks} and \eqref{eq:A2-s-id} for
$s$-variables, we have
\begin{equation}
 a=a'=\frac{1}{2}k_{1},\qquad b=b'=\frac{1}{2}k_{2}.
\end{equation}
The weight of mutation loop is thus
\begin{equation}
 W(\gamma)=\frac{q^{\frac{1}{2}(a+a')(a+a'-b)}}{(q)_{a+a'}}
  \frac{q^{\frac{1}{2}(b+b')(b+b'-a')}}{(q)_{b+b'}} =
  \frac{q^{\frac{1}{2}(k_{1}^{2}-k_{1}k_{2}+k_{2}^{2})}}
  {(q)_{k_{1}}(q)_{k_{2}}}.
\end{equation}
The mutation loop $\gamma$ is nondegenerate and positive, because the quadratic
form $k_{1}^{2}-k_{1}k_{2}+k_{2}^{2}$ is positive definite. The partition
$q$-series is, by definition,
\begin{equation}
 \label{eq:pent-Z-1}
 Z(\gamma)=\sum_{k_{1},k_{2}=0}^{\infty} 
  \frac{q^{\frac{1}{2}(k_{1}^{2}-k_{1}k_{2}+k_{2}^{2})}}
  {(q)_{k_{1}}(q)_{k_{2}}} \in \N[[q^{1/2}]].
\end{equation}

The second loop we consider is 
\begin{equation}
 \label{eq:A2-g2}
 \gamma'=(Q;\seq{m}',\varphi'),\qquad \seq{m}'=(2,1,2),\qquad \varphi'=(12),
\end{equation}
where $\varphi'=(12)$ means the transposition of the two vertices.  The
$s$- and $k$-variables are given as follows:
\begin{equation}
 \label{eq:A2-loop-2}
 \vcenter{\xymatrix @R=5mm @C=6mm @M=4pt{
 Q(0) \ar[d]_{\textstyle\mu_{2}} & a \ar[r] \ar@{=}[d]& b
  \ar@{.>}[d]^{\textstyle k'_{1}}\\
 Q(1) \ar[d]_{\textstyle\mu_{1}} &  
  a \ar@{.>}[d]^{\textstyle k'_{2}}& b' \ar[l]  \ar@{=}[d]\\
 Q(2) \ar[d]_{\textstyle\mu_{2}} & a' \ar[r] \ar@{=}[d]& 
  b' \ar@{.>}[d]^{\textstyle k'_{3}}\\
 Q(3) \ar[d]_{\textstyle (12)}
  &  a' \ar@{=}[dr]& b'' \ar[l] \ar@{=}[dl]\\
 Q(0) & a \ar[r] & b
   }}~.
\end{equation}
The relations between $k$- and $s$-variables are 
\begin{equation}
 \label{eq:A2-ks-2}
 k'_{1}=b+b'-a,\qquad k'_{2}=a+a'-b',\qquad k'_{3}=b'+b''-a'.
\end{equation}
The boundary condition implies $a'=b$ and $b''=a$. Taking this into the
account, we can solve \eqref{eq:A2-ks-2} for $s$-variables:
\begin{equation}
 a=b''=\frac{k'_2}{2}+\frac{k'_3}{2},
  \qquad b=a'= \frac{k'_1}{2}+\frac{k'_2}{2},
  \qquad  b'= \frac{k'_1}{2}+\frac{k'_3}{2}.
\end{equation}
The weight for the mutation loop $\gamma'$ is thus
\begin{equation}
\begin{split}
 W(\gamma') & = \frac{q^{\frac{1}{2}(b+b'-a)(b+b')}}{(q)_{b+b'-a}}
 \frac{q^{\frac{1}{2} (a+a'-b')(a+a')}}{(q)_{a+a'-b'}}
 \frac{q^{\frac{1}{2}  (b'+b''-a')(b'+b'')}}{(q)_{b'+b''-a'}}
 \\
 &= 
 \frac{q^{\frac{1}{2} \left( k'_1{}^2 +k'_2{}^2+k'_3{}^2 +k'_1 k'_2 
 +k'_2 k'_3 +k'_3 k'_1\right)}}{(q)_{k'_{1}}(q)_{k'_{2}}(q)_{k'_{3}}}.
\end{split}
\end{equation}
The partition $q$-series is now defined as
\begin{equation}
 \label{eq:pent-Z-2}
 Z(\gamma')=\sum_{k'_{1},k'_{2},k'_{3}=0}^{\infty}
 \frac{q^{\frac{1}{2} \left(
 k'_1{}^2
 +k'_2{}^2+k'_3{}^2
 +k'_1 k'_2 
 +k'_2 k'_3
 +k'_3 k'_1
\right)}}{(q)_{k'_{1}}(q)_{k'_{2}}(q)_{k'_{3}}}.
\end{equation}
It turns out that the partition $q$-series \eqref{eq:pent-Z-1} and \eqref{eq:pent-Z-2} 
are equal due to the identity (see e.g. \cite{Zagier2007})
\begin{equation}
 \label{eq:q-pentagon}
 \frac{1}{(q)_{m}(q)_{n}}=
  \sum_{\substack{r,s,t\geq 0\\m=r+s\\ n=s+t}}
  \frac{q^{rt}}{(q)_{r}(q)_{s}(q)_{t}}.
\end{equation}
This is no coincidence. In the next section, we state and prove a
general result about the conditions on mutation loops, which guarantee
the equality of associated partition $q$-series.

\section{Generalized pentagon identity}
\label{sec:Invariance}

The main result of this section is Theorem
\ref{thm:pentagon-move-invariance}, saying that as a function of
mutation loops, the partition $q$-series $Z(\gamma)$ is
invariant under \emph{pentagon move} of $\gamma$, which we define shortly.

\subsection{Pentagon move}

It is convenient to slightly generalize the notion of mutation
sequences/loops to keep track of vertex relabeling effect.  Let $Q$ be a
quiver with vertices $\{1,\cdots,n\}$. A finite sequence
$\seq{f}=(f_{1},\dots,f_{r})$ consisting of
\begin{itemize}
 \item[(a)] mutation $\mu_{i}$ at the vertex $i$ ($1\leq i\leq n$), or
 \item[(b)] vertex relabeling by an element
      $\sigma$ of the symmetric group $S_{n}$,
\end{itemize}
is called a \emph{mutation sequence}. If $\seq{f}(Q):=f_{r}(\cdots
(f_{1}(Q))\cdots )$ is isomorphic to $Q$ as a (labeled) quiver, then
$(Q;\seq{f})$ is called a \emph{mutation loop}. Two
sequences $\seq{f}$, $\seq{f}'$ are considered equivalent if they are
related by a series of the following moves (rewriting rules):
\begin{itemize}
 \item $(\cdots,\sigma_{1},\sigma_{2},\cdots ) \simeq
   (\cdots,\sigma_{2}\circ \sigma_{1},\cdots )$,
 \item $(\cdots,\mu_{i},\sigma,\cdots ) \simeq
  (\cdots,\sigma,\mu_{\sigma(i)},\cdots )$,
 \item $(\cdots,\id,\cdots ) \simeq
  (\cdots,\cdots )$.
\end{itemize}
Clearly any mutation sequence is equivalent to the form of 
$(\mu_{m_{1}},\dots,\mu_{m_{T}},\varphi)$, $\varphi\in S_{n}$, i.e.
a pair of a mutation sequence and a boundary condition.

\begin{figure}[bth]
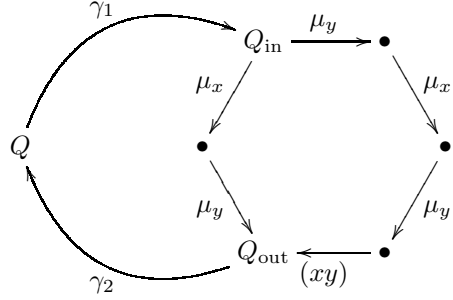

 \begin{equation*}
  \vcenter{ \xygraph{!{<0cm,0cm>;<23pt,0cm>:<0cm,20pt>::} !{(0,2)
  }*+{Q}="Q0" !{(4,4) }*+{Q_{\text{in}}}="Q1" !{(4,0)
  }*+{Q_{\text{out}}}="Q2" !{(3,2) }*+{\bullet}="Q3" !{(6,4)
  }*+{\bullet}="Q4" !{(7,2) }*+{\bullet}="Q5" !{(6,0)
  }*+{\bullet}="Q6"
  "Q0":@/^1cm/"Q1"^{\textstyle {\gamma_{1}}}
  "Q2":@/^1cm/"Q0"^{\textstyle {\gamma_{2}}} "Q1":"Q3"_{\textstyle
  \mu_{x}} "Q3":"Q2"_{\textstyle \mu_{y}} "Q1":"Q4"^{\textstyle \mu_{y}}
  "Q4":"Q5"^{\textstyle \mu_{x}} "Q5":"Q6"^{\textstyle \mu_{y}}
  "Q6":"Q2"^{\textstyle (xy)} }}
 \end{equation*}
 \caption{Pentagon move. $\gamma=(Q;\gamma_{1},\mu_{x},\mu_{y},
 \gamma_{2})$ and $
 \gamma'=(Q;\gamma_{1},\mu_{y},\mu_{x},\mu_{y},(xy),\gamma_{2})$.}\label{fig:pentagon-move}
\end{figure}

A local change of mutation loops of the following type
\begin{equation}
 \label{eq:pentagon-equiv}
  \gamma=(Q;\gamma_{1},\mu_{x},\mu_{y}, \gamma_{2})
  \qquad \longleftrightarrow \qquad
  \gamma'=(Q;\gamma_{1},\mu_{y},\mu_{x},\mu_{y},(xy),\gamma_{2})
\end{equation}
is called \emph{pentagon move}.  Here
$\gamma_{1}$, $\gamma_{2}$ are arbitrary mutation sub-sequences, and the
vertices $x,y$ are assumed to be connected by a single arrow $x\to y$ in
$Q_{\text{in}}:=\gamma_{1}(Q)$.
This condition guarantees that $(\mu_{x},\mu_{y})(Q_{\text{in}})$ and
$(\mu_{y},\mu_{x},\mu_{y},(xy))(Q_{\text{in}})$ are isomorphic as
labeled quivers: we denote this quiver by $Q_{\text{out}}$.  (Figure \ref{fig:pentagon-move})

The mutation loops of Section \ref{sec:A2-pentagon} are the simplest
example of those related by a pentagon move.  For another example, take
$ Q=(1\rightarrow 2 \leftarrow 3 \rightarrow 4)$.  The mutation loop
\begin{equation*}
 \gamma=(Q;\seq{m},\varphi),\qquad 
\seq{m}=(4,1,2,3,2,4,1), \qquad \varphi= \left(\begin{smallmatrix} 1 & 2 & 3
& 4 \\ 4 & 1 & 2 & 3\end{smallmatrix}\right)
\end{equation*}
is, via pentagon move, equivalent to 
\begin{equation*}
 \gamma'=(Q;\seq{m}',\varphi'),\qquad \seq{m}=(4,2,1,2,3,1,4,2),\qquad 
\varphi'= \left(\begin{smallmatrix} 1 & 2 & 3 & 4 \\ 1 & 4& 2 &
3\end{smallmatrix}\right). 
\end{equation*}
Indeed,
\begin{equation*}
\begin{split}
 \gamma=& (Q;\mu_{4},\underline{\strut
 \mu_{1},\mu_{2}},\mu_{3},\mu_{2},\mu_{4}, \mu_{1},\varphi)
 \\
 \rightarrow &
 (Q;\mu_{4},\underline{\mu_{2},\mu_{1},\mu_{2},(12)}, \mu_{3},\mu_{2},\mu_{4}, \mu_{1}, \varphi)
 \\
 \simeq &
 (Q;\mu_{4},\mu_{2},\mu_{1},\mu_{2}, \mu_{3},\mu_{1},\mu_{4},\mu_{2}, (12), \varphi)
 \\
 \simeq &
 (Q;\mu_{4},\mu_{2},\mu_{1},\mu_{2}, \mu_{3},\mu_{1},\mu_{4},\mu_{2},
  \varphi')=\gamma'.
\end{split}
\end{equation*}

\subsection{Generalized pentagon identity}
\label{sec:gen-pent}

The main result of this section is the next
\begin{theorem}\label{thm:pentagon-move-invariance} 
 The partition $q$-series $Z(\gamma)$ is invariant under the pentagon
 move of the loop $\gamma$; that is, for the mutation loops $\gamma$,
 $\gamma'$ in \eqref{eq:pentagon-equiv}, we have
 \begin{equation*}
  Z(\gamma)=Z(\gamma').
 \end{equation*}
\end{theorem}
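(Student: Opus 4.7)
My strategy is to reduce the claim to a purely local identity near the pentagon edge $x\to y$, and then recognize that local identity as the quantum pentagon identity \eqref{eq:q-pentagon}, generalizing the $A_{2}$ computation of Section~\ref{sec:A2-pentagon}.

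First, I factor the weights as $W(\gamma)=W(\gamma_{1})\cdot W(\mu_{x})W(\mu_{y})\cdot W(\gamma_{2})$ and $W(\gamma')=W(\gamma_{1})\cdot W(\mu_{y})W(\mu_{x})W(\mu_{y})\cdot W(\gamma_{2})$. Because each $W(\mu_{t})$ depends only on the $s$-variables at the mutating vertex and its neighbors in the current quiver, the $W(\gamma_{1})$ factors are literally equal on both sides, and the $W(\gamma_{2})$ factors agree once the pentagon-output $s$-variables---namely $(s_{x}',s_{y}')$ in $\gamma$ and $(s_{y}^{(2)},s_{x}^{(1)})$ in $\gamma'$---are identified via the relabeling $(xy)$. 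After summing over all $k$-variables outside the pentagon, Theorem~\ref{thm:pentagon-move-invariance} reduces to the local identity
\[
W(\mu_{x})W(\mu_{y})\;=\;\sum_{u\ge 0} W(\mu_{y})W(\mu_{x})W(\mu_{y}),
\]
where $u:=s_{y}^{(1)}$ is the unique purely internal $s$-variable on the right-hand side.

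To prove this local identity I use Remark~\ref{rem:k-vee} to write each $W(\mu)=q^{kk^{\vee}/2}/(q)_{k}$. Let $A,B,C,D$ denote the sums of $s$-variables on the spectator neighbors of $x,y$ in $Q_{\text{in}}$ (in-neighbors of $x$; out-neighbors of $x$ other than $y$; in-neighbors of $y$ other than $x$; out-neighbors of $y$). Tracking the neighborhoods of $x$ and $y$ through each mutation---in particular the $2$-cycle cancellations between $y$ and the vertices contributing to $D$ that occur during the middle step $\mu_{x}$ of $\gamma'$---yields explicit formulas for the six pairs $(k_{\bullet},k_{\bullet}^{\vee})$, which moreover satisfy the key relations $k_{x}=k_{2}'+k_{3}'$ and $k_{y}=k_{1}'+k_{2}'$. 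Applying \eqref{eq:q-pentagon} with $(m,n;r,s,t)=(k_{x},k_{y};k_{3}',k_{2}',k_{1}')$, the local identity reduces to the polynomial identity
\[
F_{\gamma'}-F_{\gamma}\;=\;k_{1}'k_{3}'
\]
in the $s$-variables, where $F_{\gamma}=\tfrac12(k_{x}k_{x}^{\vee}+k_{y}k_{y}^{\vee})$ and $F_{\gamma'}=\tfrac12\sum_{i}k_{i}'k_{i}'{}^{\vee}$; this is verified by direct expansion (both sides are quadratic in the $s$-variables with matching coefficients).

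The main obstacle is the combinatorial bookkeeping in the middle step: one must correctly enumerate how the neighborhoods of $x$ and $y$ evolve through the three mutations $\mu_{y}\mu_{x}\mu_{y}$ in $\gamma'$, paying careful attention to which arrows are added and which $2$-cycles get cancelled. The neat additivity relations $k_{x}=k_{2}'+k_{3}'$ and $k_{y}=k_{1}'+k_{2}'$ both serve as a sanity check that this bookkeeping is correct and reveal the structural reason why \eqref{eq:q-pentagon} applies cleanly, with the spectator variables $A,B,C,D$ absorbed entirely into the $k^{\vee}$-terms.
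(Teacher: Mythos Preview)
Your approach matches the paper's proof: the internal/external split, the additivity constraints $k_x=k_2'+k_3'$ and $k_y=k_1'+k_2'$, the exponent difference $F_{\gamma'}-F_\gamma=k_1'k_3'$, and the appeal to \eqref{eq:q-pentagon} are exactly the paper's argument (your $k_1',k_2',k_3'$ are the paper's $k_3,k_4,k_5$), while your use of the $k^\vee$-variables from Remark~\ref{rem:k-vee} is a mild streamlining of the exponent computation that the paper carries out directly. One small correction: the sum in your local identity should be over the values of $u$ for which $k_1',k_2',k_3'\ge 0$ (equivalently, over $u\in\Z$ with the convention $1/(q)_n=0$ for $n<0$), not over $u\ge 0$, since $u=s_y^{(1)}$ is an $s$-variable whose admissible range is dictated by nonnegativity of the $k'$-variables rather than by any intrinsic positivity.
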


The rest of this subsection is devoted to the proof of Theorem
\ref{thm:pentagon-move-invariance}.  The key idea is to cut the mutation
loops $\gamma$, $\gamma'$ at $Q_{\text{in}}$ and $Q_{\text{out}}$ into
two pieces --- ``internal part'' and ``external part'' (Figure
\ref{fig:pentagon-move}), and treat their contribution to the mutation
weights separately. We put
\begin{equation}
 \begin{aligned}
 \gamma &=(Q;\gamma_{1} , \seq{m}, \gamma_{2} )
  &&& \seq{m}&=(\mu_{x},\mu_{y}),
  \\
 \gamma' &=(Q;\gamma_{1} , \seq{m}' , \gamma_{2} )
  &&& \seq{m}'&=(\mu_{y},\mu_{x},\mu_{y},(xy)).
 \end{aligned}
\end{equation}
The subsequences $\seq{m}$, $\seq{m}'$ from $Q_{\text{in}}$ to
$Q_{\text{out}}$ is referred to as ``internal''; the rest is considered
as ``external''.

By definition of pentagon move, two mutating vertices $x$, $y$ are
connected by a single arrow in $Q_{\text{in}}$ (Figure
\ref{fig:near-xy}). The vertices $x$ and $y$ can be a source or a target
of other arrows in $Q_{\text{in}}$; such arrows are collectively denoted
as $a_{i}\to x$, $b_{j}\to y$, $x\to c_{k}$ and $y\to d_{l}$.  Along the
mutation paths $\seq{m}$, $\seq{m}'$ from $Q_{\text{in}}$ to
$Q_{\text{out}}$, the quiver will change as in Figure
\ref{fig:general-pentagon}.

\begin{figure}[th]
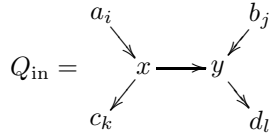

\begin{equation*}
 Q_{\text{in}}=
 \vcenter{
  \xygraph{!{<0cm,0cm>;<20pt,0cm>:<0cm,20pt>::}
  !{(0.8,1) }*+<4pt>{x}="x"
  !{(2.2,1) }*+<4pt>{y}="y"
  !{(0,2) }*+<3pt>{a_{i}}="a"
  !{(3,2) }*+<1pt>{b_{j}}="b"
  !{(0,0) }*+<3pt>{c_{k}}="c"
  !{(3,0) }*+<2pt>{d_{l}}="d"
  "x":"y"
  "a":"x"
  "x":"c"
  "b":"y"
  "y":"d"
  }}
\end{equation*}
 \caption{The quiver $Q_{\text{in}}$. Only the arrows incident on $x$
 or $y$ are shown. Some of the vertices $a_{i}$, $b_{j}$, $c_{k}$,
 $d_{l}$ may be missing, duplicated or identified.}\label{fig:near-xy}
\end{figure}

\begin{figure}[th]
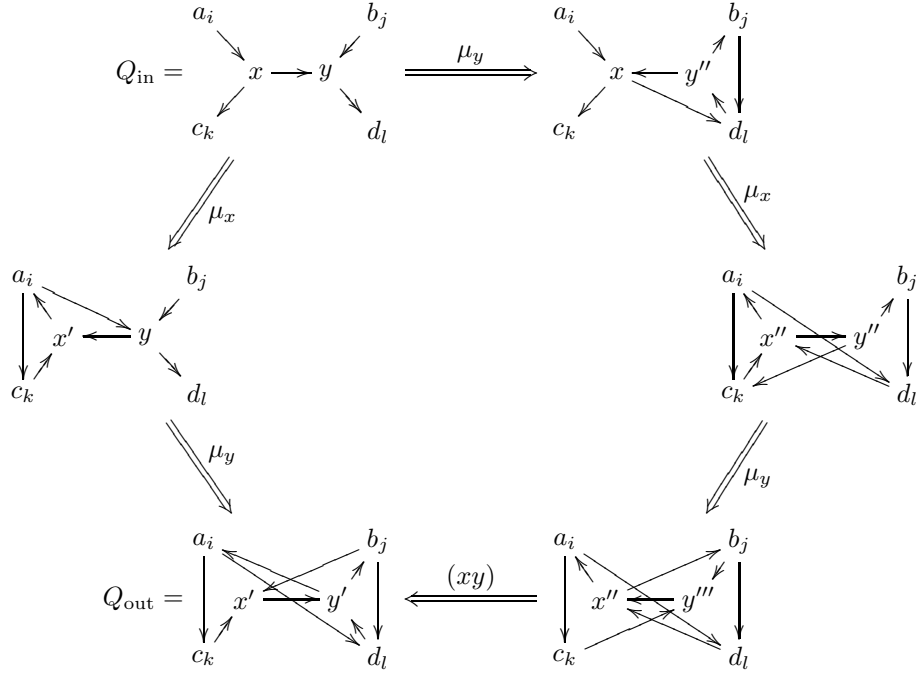

\begin{equation*}
 \xygraph{!{<0cm,0cm>;<16mm,0cm>:<0cm,35mm>::}
!{(1,3)}*+{\llap{$Q_{\text{in}}=$}
\vcenter{\xybox{\xygraph{!{<0cm,0cm>;<22pt,0cm>:<0cm,22pt>::}
!{(0.9,1) }*+{x}="x" !{(2.1,1) }*+{y}="y" !{(0,2) }*+{a_{i}}="a" !{(0,0)
}*+{c_{k}}="c" !{(3,2) }*+{b_{j}}="b" !{(3,0) }*+{d_{l}}="d" "x":"y"
"a":"x" "x":"c" "b":"y" "y":"d" }} }}="Q1"
!{(-0.5,2)}*+{\xybox{\xygraph{!{<0cm,0cm>;<22pt,0cm>:<0cm,22pt>::}
!{(0.7,1) }*+{x'}="x" !{(2.1,1) }*+{y}="y" !{(0,2) }*+{a_{i}}="a"
!{(0,0) }*+{c_{k}}="c" !{(3,2) }*+{b_{j}}="b" !{(3,0) }*+{d_{l}}="d"
"x":"a" "a":"c" "a":"y" "c":"x" "b":"y" "y":"d" "y":"x" }}}="Q2"
!{(1,1)}*+{\llap{$Q_{\text{out}}=$}
\vcenter{\xybox{\xygraph{!{<0cm,0cm>;<22pt,0cm>:<0cm,22pt>::}
!{(0.7,1) }*+{x'}="x" !{(2.3,1) }*+{y'}="y" !{(0,2) }*+{a_{i}}="a"
!{(0,0) }*+{c_{k}}="c" !{(3,2) }*+{b_{j}}="b" !{(3,0) }*+{d_{l}}="d"
"x":"y" "a":"c" "a":"d" "c":"x" "b":"x" "b":"d" "d":"y" "y":"a" "y":"b"
}}} }="Q3"
!{(4,3)}*+{\xybox{\xygraph{!{<0cm,0cm>;<22pt,0cm>:<0cm,22pt>::}
!{(0.9,1) }*+{x}="x" !{(2.3,1) }*+{y''}="y" !{(0,2) }*+{a_{i}}="a"
!{(0,0) }*+{c_{k}}="c" !{(3,2) }*+{b_{j}}="b" !{(3,0) }*+{d_{l}}="d"
"y":"x" "a":"x" "x":"c" "y":"b" "d":"y" "x":"d" "b":"d" }}}="Q4"
!{(5.4,2)}*+{\xybox{\xygraph{!{<0cm,0cm>;<22pt,0cm>:<0cm,22pt>::}
!{(0.7,1) }*+{x''}="x" !{(2.3,1) }*+{y''}="y" !{(0,2) }*+{a_{i}}="a"
!{(0,0) }*+{c_{k}}="c" !{(3,2) }*+{b_{j}}="b" !{(3,0) }*+{d_{l}}="d"
"x":"y" "x":"a" "a":"c" "a":"d" "c":"x" "b":"d" "d":"x" "y":"c" "y":"b"
}}}="Q5" 
!{(4,1)}*+{\xybox{\xygraph{!{<0cm,0cm>;<22pt,0cm>:<0cm,22pt>::}
!{(0.7,1) }*+{x''}="x" !{(2.3,1) }*+{y'''}="y" !{(0,2) }*+{a_{i}}="a"
!{(0,0) }*+{c_{k}}="c" !{(3,2) }*+{b_{j}}="b" !{(3,0) }*+{d_{l}}="d"
"y":"x" "x":"a" "x":"b" "a":"c" "a":"d" "c":"y" "b":"y" "b":"d" "d":"x"
}}}="Q6" "Q1":@{=>}^{\textstyle \mu_{x}}"Q2" "Q2":@{=>}^{\textstyle
\mu_{y}}"Q3" "Q1":@{=>}^{\textstyle \mu_{y}}"Q4" "Q4":@{=>}^{\textstyle
\mu_{x}}"Q5" "Q5":@{=>}^{\textstyle \mu_{y}}"Q6" "Q6":@{=>}_{\textstyle
(xy)}"Q3" }
\end{equation*}
 \caption{Pentagon move and quiver mutations. The vertices of quivers
 are represented by the corresponding $s$-variables.  $\seq{m}=(\mu_x,\mu_y)$ and
 $\seq{m}'=(\mu_y,\mu_x,\mu_y,(xy))$ represent two internal paths from
 $Q_{\text{in}}$ to $Q_{\text{out}}$. We have $x''=y'$, $y'''=x'$ via
 transposition $(xy)$ of vertices.}\label{fig:general-pentagon}
\end{figure}

Let $k_{1},k_{2}$ be the $k$-variables associated with
$\seq{m}=(\mu_x,\mu_y)$, and $k_{3},k_{4},k_{5}$ be those for
$\seq{m}'=(\mu_y,\mu_x,\mu_y, (xy))$.  The $k$-variables on external
part are denoted by $\{ l_i \}$; they are common to both $\gamma$ and
$\gamma'$. For the internal part, the $k$- and $s$-variables are related as
\begin{equation}
 \label{eq:GP-ks}
 \begin{aligned}
  k_{1}&=x+x'-\sum a_{i}, \\
  k_{2}&=y+y'-\sum a_{i} -\sum b_{j},  \\
  k_{3}& =y+y''-\sum b_j-x, \\
  k_{4}& = x+x''-\sum a_i-y'',\\
  k_{5}&=y''+y'''-x''. 
\end{aligned}
\end{equation}
The relations \eqref{eq:GP-ks} and the identification $y'=x''$, $x'=y'''$
yield the constraint
\begin{equation}
  \label{eq:k-sum-const}
   k_{3}+k_{4}= k_{2},\qquad  k_{4}+k_{5}=k_{1}.
\end{equation}
The partition $q$-series has the following form:
\begin{equation}
 \label{eq:GP-Zs}
  \begin{aligned}
 Z(\gamma )&=\sum_{k_1,k_2,l_i \geq 0}
 \frac{q^{F_{\text{ext}}(k_1,k_2,l)}}{\prod_i (q)_{l_i}}\times
 \frac{q^{F_{\text{int}}(k_1,k_2,l)}}{(q)_{k_1}(q)_{k_2}}
   \\ &=\sum_{l_i \geq 0} \frac{1}{\prod_i (q)_{l_i}} 
   \times \sum_{k_1,k_2\geq 0} 
   \frac{q^{F_{\text{ext}}(k_1,k_2,l)+
   F_{\text{int}}(k_1,k_2,l)}}{(q)_{k_1}(q)_{k_2}},
   \\[5pt]
   Z(\gamma' )&=\sum_{k_3,k_4,k_5, l_i \geq 0}
   \frac{q^{F'_{\text{ext}}(k_3,k_4,k_5,l)}}{\prod_i (q)_{l_i}} \times
   \frac{q^{F'_{\text{int}}(k_3,k_4,k_5,l)}}{(q)_{k_3}(q)_{k_4}(q)_{k_5}}
   \\ &=\sum_{l_i \geq 0} \frac{1}{\prod_i (q)_{l_i}} \times
   \sum_{k_3,k_4,k_5 \geq 0}
   \frac{q^{F'_{\text{ext}}(k_3,k_4,k_5,l)+F'_{\text{int}}(k_3,k_4,k_5,l)}}
   {(q)_{k_3}(q)_{k_4}(q)_{k_5}}.
  \end{aligned}
\end{equation}
The external part of $\gamma$ and $\gamma'$ share the same set of
$s$-variables, so as functions $s$-variables,
$F_{\text{ext}}=F'_{\text{ext}}$. Therefore under the identification
\eqref{eq:k-sum-const}, we have
\begin{equation}
 \label{eq:Fext-rel}
  \left(
 F_{\text{ext}}(k_1,k_2,l)
 \biggm|\!{}_{\substack{k_{1}=k_{4}+k_{5}\\k_{2}=k_{3}+k_{4}}}
\right)
 =F'_{\text{ext}}(k_3,k_4,k_5,l).
\end{equation}
As for the internal part, we obtain after some computation,
\begin{equation*}
 \begin{split}
  F_{\text{int}}(k_1,k_2,l) 
  & = \frac{1}{2}
  (x+x'-\sum a_i)(x+x'-\sum c_k-y)
    \\  
  & \qquad 
   +  \frac{1}{2}(y+y'-\sum a_i-\sum b_j)(y+y'-x'-\sum d_l)
  \\
  & = \frac{1}{2}\left(
  k_1^2+k_2^2-k_1k_2+Ak_{1}+Bk_{2} \right), 
  \\[8pt]
  F'_{\text{int}}(k_3,k_4,k_5,l) 
  & =  \frac{1}{2} (y+y''-\sum b_j-x)(y+y''-\sum d_l)\\
  & \qquad   +\frac{1}{2}(x+x''-\sum a_i-y'')(x+x''-\sum c_k-\sum d_l) \\
  & \qquad   +\frac{1}{2}(y''+y'''-x'')(y''+y'''-\sum b_j-\sum c_k) 
  \\
  &= \frac{1}{2} \bigl(
  k_3^2+k_4^2+k_5^2+k_3k_4+k_4k_5+k_3k_5 +Bk_{3} +(A+B)k_{4}+Ak_{5}
  \bigr),
 \end{split}
\end{equation*}
where
\begin{equation*}
 A:=\sum a_i-\sum c_k-y, \qquad B :=  \sum b_j-\sum d_k+x.
\end{equation*}
It is now easy to check that under the relation \eqref{eq:k-sum-const},
\begin{equation}
 \label{eq:Fint-rel}
 F'_{\text{int}}(k_3,k_4,k_5,l) = 
\left(
F_{\text{int}}(k_1,k_2,l)\biggm|\!{}_{\substack{k_{1}=k_{4}+k_{5}\\k_{2}=k_{3}+k_{4}}}
\right) + k_{3}k_{5}.
\end{equation}
By substituting \eqref{eq:Fext-rel} and \eqref{eq:Fint-rel} into
\eqref{eq:GP-Zs}, we conclude $Z(\gamma )$ and $Z(\gamma')$ are equal
thanks to \eqref{eq:q-pentagon}. This completes the proof of Theorem 
\ref{thm:pentagon-move-invariance}.

\section{Partition $q$-series and Characters I --- Dynkin case}
\label{sec:ZI}

Let $Q$ be an alternating quiver of Dynkin type $A_{n}$, $D_{n}$ or
$E_{n}$. Denote by $\seq{m}_{+}$, $\seq{m}_{-}$ the set of sources,
sinks of $Q$, respectively.  We consider the following mutation sequence
of length $n=\#Q_{0}$:
\begin{equation}
 \label{eq:mu=mu-mu+}
 \seq{m}= \seq{m}_{-}\seq{m}_{+}.
\end{equation}
Here consider $\seq{m}_{\pm}$ as sequence of mutations. The ordering
within $\seq{m}_{\pm}$ does not matter since there are no arrows
connecting two sources or two sinks in $Q$.  It is easy to check that
\begin{equation*}
 \mu_{\seq{m}_{-}}(Q) = Q^{op},\qquad
  \mu_{\seq{m}_{+}}(Q^{op})=Q.
\end{equation*}
Thus with trivial boundary condition $\varphi=\id$, $\gamma=
(Q;\seq{m},\id)$ makes up a mutation loop. The $s$-variables $s_{v}$ and
$s'_{v}$, before and after the mutation at $v$, are identified for each
$v\in Q_{0}$. Since every vertex $v$ of $Q$ is mutated exactly once, it
is convenient to label the $k$-variables by vertices, not by mutation
order; we use the notation $\seq{k}=(k_{v})_{v\in Q_{0}}$.

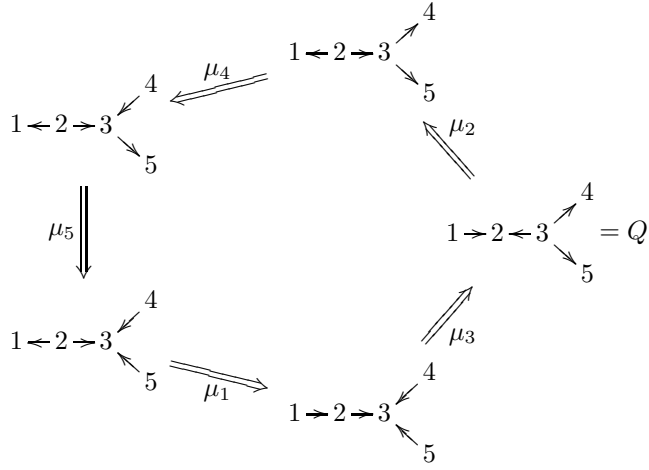
\begin{figure}[bhtp]
\begin{equation*}
 \xygraph{!{<0cm,0cm>;<10mm,0cm>:<0cm,6mm>::}
  !{(5.8,0)}*{
  \vcenter{\xybox{\xymatrix @R=5pt @C=8pt @M=2pt{ & & & {4} \\
   {1} \ar[r]  &{2} & {3} \ar[l]  \ar[ur] \ar[dr]  \\
   &&& {5}  }}}\rlap{$=Q$}
    }="Q0"
  !{(3.7,4)}*++{
   \vcenter{\xybox{\xymatrix @R=5pt @C=8pt @M=2pt{ & & & {4} \\
   {1}    &{2} \ar[l]\ar[r] & {3}  \ar[ur] \ar[dr]  \\
   &&& {5}  }}}}="Q1"
  !{(0,2.4)}*+{
   \vcenter{\xybox{\xymatrix @R=5pt @C=8pt @M=2pt{ & & & {4}  \ar[dl] \\
   {1}    &{2} \ar[l]\ar[r] & {3}  \ar[dr]  \\
   &&& {5}  }}}}="Q2"
  !{(0,-2.4)}*+{
   \vcenter{\xybox{\xymatrix @R=5pt @C=8pt @M=2pt{ & & & {4}  \ar[dl] \\
   {1}    &{2} \ar[l]\ar[r] & {3} \\
   &&& {5}  \ar[ul]   }}}}="Q3"
  !{(3.7,-4)}*++{
   \vcenter{\xybox{\xymatrix @R=5pt @C=8pt @M=2pt{ & & & {4}  \ar[dl] \\
   {1} \ar[r] &{2} \ar[r] & {3} \\
   &&& {5}  \ar[ul]   }}}}="Q4"
"Q0":@{=>}_{\textstyle \mu_{2}}"Q1"
"Q1":@{=>}_{\textstyle \mu_{4}}"Q2"
"Q2":@{=>}_{\textstyle \mu_{5}}"Q3"
"Q3":@{=>}_{\textstyle \mu_{1}}"Q4"
"Q4":@{=>}_{\textstyle \mu_{3}}"Q0"}
\end{equation*}
\caption{Example of type $D_{5}$: the mutation loop
$\gamma=(Q; (2, 4, 5, 1, 3),\id)$.}\label{fig:D5-example}
\end{figure}

To motivate our main result of this section, Theorem
\ref{thm:Z-multisum}, we first give an example.  Consider an alternating
quiver $Q$ of type $D_{5}$
\begin{equation*}
 Q~=~ \vcenter{\xymatrix @R=3mm @C=5mm @M=2pt{ & & & {4} \\
 {1} \ar[r]  &{2} & {3} \ar[l]  \ar[ur] \ar[dr]  \\
 &&& {5}  }}
\end{equation*}
and the mutation sequence 
\begin{equation*}
 \seq{m}_{-}=(2, 4, 5),\quad  \seq{m}_{+}=(1, 3),\quad 
 \gamma=(Q;\seq{m}_{-}\seq{m}_{+},\id)=(Q;(2, 4, 5, 1, 3),\id).
\end{equation*}
See Figure \ref{fig:D5-example}.

The linear relations between $k$- and $s$-variables are
\begin{equation}
\label{eq:k2s}
\begin{split}
 & k_2= -s_1+2 s_2-s_3,\quad  k_4= 2 s_4-s_3,\quad k_5= 2 s_5-s_3,\\
 &  k_1= 2 s_1-s_2,\quad k_3= -s_2+2 s_3-s_4-s_5.
\end{split}
\end{equation}
Recall that $s'_{v}=s_{v}$ by the boundary condition. The weight is then
expressed as
\begin{equation}
 \label{eq:Z-D5-S}
 \begin{split}
 W(\gamma)    &=
  \frac{q^{\frac{1}{2} (2 s_2-s_1-s_3) 2s_2}}{(q)_{2 s_2-s_1-s_3}} 
  \frac{q^{\frac{1}{2} (2 s_4-s_3) 2s_{4}}}{(q)_{2 s_4-s_3}} 
  \frac{q^{\frac{1}{2} (2 s_5-s_3) 2s_{5}}}{(q)_{2 s_5-s_3}} 
  \frac{q^{\frac{1}{2} (2 s_1-s_2) 2s_{1}}}{(q)_{2 s_1-s_2}}
  \frac{q^{\frac{1}{2} (2s_3-s_2-s_4-s_5) 2s_{3}}}{(q)_{2 s_3-s_2-s_4-s_5}}
  \\
  & = \frac{q^{2 s_1^2-2 s_2 s_1+2 s_2^2+2
s_3^2+2 s_4^2+2 s_5^2-2 s_2 s_3-2 s_3 s_4-2 s_3 s_5}}{
(q)_{2 s_2-s_1-s_3}
(q)_{2 s_4-s_3}
(q)_{2 s_5-s_3}
(q)_{2 s_1-s_2}
(q)_{2 s_3-s_2-s_4-s_5}
}.
 \end{split}
\end{equation}
The relation \eqref{eq:k2s} is nondegenerate: we can solve
\eqref{eq:k2s} for $s$-variables:
\begin{equation}
 \label{eq:s2k}
\left\{
\begin{aligned}
 s_1 & =  \left(2 k_1+2 k_2+2 k_3+k_4+k_5\right)/2, \\
 s_2 & =  k_1+2 k_2+2 k_3+k_4+k_5, \\
 s_3 & =  \left(2 k_1+4 k_2+6 k_3+3 k_4+3 k_5\right)/2, \\
 s_4 & =  \left(2 k_1+4 k_2+6 k_3+5 k_4+3 k_5\right)/4, \\
 s_5 & =  \left(2 k_1+4 k_2+6 k_3+3 k_4+5 k_5\right)/4.
\end{aligned}
\right.
\end{equation}
Substituting these into \eqref{eq:Z-D5-S}, we can express $Z(\gamma)$ in
terms of $k$-variables alone:
\begin{equation}
 \label{eq:Z-D5-K} Z(\gamma)=\sum_{k_{1},\dots,k_{5}=0}^{\infty} 
  \frac{q^{k_1^2+2 k_2^2+3 k_3^2+\frac{5}{4}k_4^2+\frac{5}{4} k_5^2
  +2 k_1 k_2 +2 k_1 k_3 + k_1 k_4+ k_1 k_5 +4 k_2 k_3+2 k_2 k_4+2
  k_2 k_5+3 k_3 k_4+3 k_3 k_5+\frac{3}{2}k_4 k_5}}
  {(q)_{k_1} (q)_{k_2} (q)_{k_3} (q)_{k_4} (q)_{k_5}}.
\end{equation}

Let $A[\seq{x}]$ denote the quadratic form associated
with a symmetric $n\times n$ matrix $A=(a_{ij})$:
\begin{equation}
 A[\seq{x}]=\sum_{i,j=1}^{n} a_{ij}x_{i}x_{j} = \seq{x}^{T}A\seq{x}, \qquad
 (\seq{x}=(x_{1},\dots,x_{n})).
\end{equation}
The exponents of $q$ in the summand \eqref{eq:Z-D5-S} or
\eqref{eq:Z-D5-K} are quadratic form in $s$- or $k$-variables; they are
neatly expressed as $C[\seq{s}]$ and $D[\seq{k}]$, respectively, where
\begin{equation}
C= 
{\arraycolsep=2pt
\left(
\begin{array}{rrrrr}
 2 & -1 & 0 & 0 & 0 \\
 -1 & 2 & -1 & 0 & 0 \\
 0 & -1 & 2 & -1 & -1 \\
 0 & 0 & -1 & 2 & 0 \\
 0 & 0 & -1 & 0 & 2
\end{array}
\right)}, 
\qquad 
D=C^{-1}=\frac{1}{4}
{\arraycolsep=4pt
\left(
\begin{array}{ccccc}
 4 & 4 & 4 & 2 & 2 \\
 4 & 8 & 8 & 4 & 4 \\
 4 & 8 & 12 & 6 & 6 \\
 2 & 4 & 6 & 5 & 3 \\
 2 & 4 & 6 & 3 & 5
\end{array}
\right)}
\end{equation}
are nothing but the Cartan matrix of type $D_{5}$ and its inverse.  The
linear relations \eqref{eq:k2s} and \eqref{eq:s2k} are also simply given
by
\begin{equation}
 \seq{k}=C\seq{s},\qquad 
 \seq{s}=D\seq{k}.
\end{equation}

We write the product of $q$-Pochhammer symbols as
\begin{equation}
 (q)_{\seq{v}} :=  \prod_{i\in I} (q)_{v_{i}},
\end{equation}
where $\seq{v}=(v_{i})_{i\in I}$ is a vector of nonnegative integers.
The denominators of the weights are then simply expressed as
$(q)_{\seq{k}}$.

\begin{theorem}\label{thm:Z-multisum} Let $Q$ be an alternating
 quiver of simply-laced Dynkin type $X_{n}$ ($X_{n}=A_{n},D_{n},E_{n}$),
 and $ \gamma= (Q;\seq{m}_{-}\seq{m}_{+},\id)$ be the mutation loop
 defined in \eqref{eq:mu=mu-mu+}. Then the partition $q$-series
 $Z(\gamma)$ has a following form:
 \begin{equation}
  \label{eq:Z(m)-Dynkin}
  Z(\gamma)
   =\sum_{\seq{k}=(k_{1},\dots,k_{n})\in \N^{n}}
   \frac {q^{D[\seq{k}]}}{ (q)_{\seq{k}}}. 
 \end{equation}
 Here $D$ is the inverse of the Cartan matrix $C=(c_{ij})$ of type
 $X_{n}$.  The relation between $k$- and $s$-variables is nondegenerate
 and is given by $\seq{k}=C\seq{s}$.
\end{theorem}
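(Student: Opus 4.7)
The plan is to exploit two features that make the mutation loop $\gamma$ of \eqref{eq:mu=mu-mu+} remarkably transparent: each vertex of $Q$ is mutated exactly once, and at the instant it is mutated that vertex is a sink of the current quiver. Together with the trivial boundary condition $\varphi=\id$ these force the weight $W(\gamma)$ to be controlled entirely by the Cartan matrix.

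First I would verify the structural observation. If $v \in \seq{m}_{-}$ is a sink of $Q$, the only mutations preceding it take place at other sinks $u$ of $Q$. Since an alternating quiver has no arrow between two sinks, $u$ and $v$ share no incident arrow and no length-two path through $u$ meets $v$; hence the mutation at $u$ leaves every arrow incident on $v$ intact, and $v$ is still a sink when its turn comes. Symmetrically, after $\seq{m}_{-}$ the quiver has become $Q^{op}$, in which every source of $Q$ is a sink; the mutations of $\seq{m}_{+}$ preceding a vertex $v \in \seq{m}_{+}$ take place at other sources of $Q$, again non-adjacent to $v$, so $v$ is still a sink at the moment of its own mutation.

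Next I would unpack the linear relation. The trivial boundary identifies $s'_v = s_v$ for every $v$; combined with the sink property (incoming arrows only, from graph-neighbors, and no outgoing arrows), \eqref{eq:k-and-s} becomes
\[
 k_v \;=\; 2 s_v - \sum_{a \sim v} s_a \;=\; (C \seq{s})_v,
\]
where $C$ is the simply-laced Cartan matrix of type $X_n$. Since $C$ is positive definite it is invertible, and one has $\seq{s} = D \seq{k}$ with $D = C^{-1}$; this proves nondegeneracy and gives the stated change of variables.

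Finally I would compute the weight using the $k^{\vee}$-formulation of Remark \ref{rem:k-vee}. The absence of outgoing arrows at $v$ at its mutation time gives $k_v^{\vee} = 2 s_v$, so \eqref{eq:W-def-2} collapses to
\[
 W(m_v) \;=\; \frac{q^{\frac{1}{2} k_v \cdot 2 s_v}}{(q)_{k_v}} \;=\; \frac{q^{k_v s_v}}{(q)_{k_v}}.
\]
Multiplying over all $v \in Q_0$ and using the symmetry of $D$ yields
\[
 W(\gamma) \;=\; \frac{q^{\seq{k}^{T} \seq{s}}}{(q)_{\seq{k}}} \;=\; \frac{q^{\seq{k}^{T} D \seq{k}}}{(q)_{\seq{k}}} \;=\; \frac{q^{D[\seq{k}]}}{(q)_{\seq{k}}}.
\]
Positivity of $\gamma$ follows at once from positive definiteness of $D$, and summing over $\seq{k} \in \N^{n}$ delivers \eqref{eq:Z(m)-Dynkin}. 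The only step requiring any genuine thought is the structural observation in the second paragraph; everything following it is a direct substitution.
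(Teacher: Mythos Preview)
Your proof is correct and follows essentially the same route as the paper's: observe that each vertex is a sink at the instant of its mutation, read off $\seq{k}=C\seq{s}$ and the vanishing outgoing-arrow contribution, and collapse the resulting exponent $\seq{k}^{T}\seq{s}=C[\seq{s}]$ to $D[\seq{k}]$ via $\seq{s}=D\seq{k}$. The only cosmetic difference is that you package the weight through the $k^{\vee}$ formulation of Remark~\ref{rem:k-vee}, whereas the paper writes out the numerator $q^{\frac{1}{2}(2s_a-\sum_{i\sim a}s_i)(2s_a)}$ directly; you also spell out why earlier mutations in $\seq{m}_{\pm}$ leave the local picture at $v$ untouched, which the paper leaves implicit.
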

\begin{proof}
 First consider the mutation sequence $\seq{m}_{-}$ applied on $Q$. It
 is important to note that every mutation vertex $a\in \seq{m}_{-}$ is a
 sink of $Q$.  So we have
 \begin{equation}
  \label{eq:ZI-ks-1}
  k_{a}=s_{a}+s'_{a}-\sum_{i\to a\in Q} s_{i} 
  =2s_{a}-\sum_{i\sim a\in \underline{Q}} s_{i},\qquad (a\in \seq{m}_{-}),
 \end{equation}
 where $i\sim a$ means that the vertices $i$ and $a$ are adjacent in the
 underlying Dynkin graph $\underline{Q}$.
 Here we used the identification $s_{a}'=s_{a}$. 
 The weight of the
 mutation at sink $a\in \seq{m}_{-}$ is
 \begin{equation}
  \label{eq:wt-}
  \frac{ q^{\frac{1}{2} (2s_{a}-\sum_{i\sim a} s_{i})\cdot(2s_{a}-0)}}
 {(q)_{k_{a}}} = \frac{q^{\sum_{i=1}^{n} c_{ia}
 s_{i}s_{a}}}{(q)_{k_{a}}}.
 \end{equation}

 Next consider the mutation sequence $\seq{m}_{+}$ on
 $\seq{m}_{-}(Q)=Q^{op}$. Again, every mutating vertex $b\in
 \seq{m}_{+}$ is a sink of $Q^{op}$. Therefore 
 \begin{equation}
  \label{eq:ZI-ks-2}
  k_{b}=s_{b}+s'_{b}-\sum_{i\to b\in Q^{op}} s_{i} 
  =2s_{b}-\sum_{i\sim b\in \underline{Q}} s_{i}, \qquad (b\in \seq{m}_{-}).
 \end{equation}
 The weight of the mutation at $b\in \seq{m}_{+}$ is given by
 \begin{equation}
  \label{eq:wt+}
  \frac{q^{ \frac{1}{2} (2s_{b}-\sum_{i\sim b} s_{i})\cdot (2s_{b}-0)}
  }{(q)_{k_{b}}}
  = 
  \frac{q^{\sum_{i=1}^{n} c_{ib}
  s_{i}s_{b}}}{(q)_{k_{b}}}.
 \end{equation}
 Clearly the relations \eqref{eq:ZI-ks-1} and \eqref{eq:ZI-ks-2} are
 combined into 
 \begin{equation}
  \label{eq:ks-rel-Dynkin}
  \seq{k}=C\seq{s},
 \end{equation}
 where $C$ is the Cartan matrix of type $\underline{Q}$.  

 Collecting \eqref{eq:wt-} and \eqref{eq:wt+}, the mutation weight of
 $\gamma$ is expressed as
 \begin{equation}
  \label{eq:Z(m)-S-summand} W(\gamma)=
   \prod_{a\in \seq{m}_{-}} \frac{q^{\sum_{i=1}^{n} c_{ia}
   s_{i}s_{a}}}{(q)_{k_{a}}} \prod_{b\in \seq{m}_{+}}
   \frac{q^{\sum_{i=1}^{n} c_{ib} s_{i}s_{b}}}{(q)_{k_{b}}}
   = \frac{q^{\sum_{i=1}^{n} c_{ij} s_{i}s_{j}}}
   {\prod_{i=1}^{n} (q)_{k_{i}}}
   =
   \frac{q^{C[\seq{s}]}}{(q)_{\seq{k}}},
 \end{equation}
 where we used $c_{ij}=c_{ji}$.  Since
 $\seq{s}=C^{-1}\seq{k}=D\seq{k}$, we have
 \begin{equation*}
  C[\seq{s}]= C[D\seq{k}] = \seq{k}^{T} (D^{T}CD) \seq{k}=
   \seq{k}^{T} D \seq{k}= D[\seq{k}].
 \end{equation*}
 Putting this into \eqref{eq:Z(m)-S-summand} and summing over $\seq{k}$,
 we obtain the desired formula for the partition $q$-series.
 \end{proof}

\section{Partition $q$-series and Characters II ---  square products}
\label{sec:ZII}

\subsection{Products of quivers and their mutations} 

\begin{figure}[b]
 \begin{equation*}
  \vcenter{
   \xymatrix@R=4.9mm @C=4.9mm @M=3pt{
    & \bullet \ar[r] &  \bullet &  \bullet \ar[l]\ar[r] & 
   \bullet & \bullet \ar[l] & \llap{$Q$} \\
  \bullet \ar[d] & \circ \ar[r]\ar[d] &  \circ \ar[d] &  
   \circ \ar[l]\ar[d]\ar[r] & \circ \ar[d] & \circ \ar[l]\ar[d]  \\ 
  \bullet & \circ \ar[r] &  \circ &  \circ \ar[l]\ar[r] & \circ & \circ
   \ar[l] \\
  \bullet \ar[u] & \circ \ar[r]\ar[u] &  \circ \ar[u] &  \circ
   \ar[l]\ar[u]\ar[r] & \circ \ar[u] & \circ \ar[l]\ar[u]  \\ 
  Q' &&& {\hbox to 0pt{$Q\otimes Q'$}}\\ 
   }}
   \qquad\quad
   \vcenter{
   \def\p{{\oplus}}
   \def\m{{\ominus}}
   \xymatrix@R=4mm @C=4mm @M=3pt{
  &   \bullet \ar[r] &  \bullet &  \bullet \ar[l]\ar[r] & 
   \bullet & \bullet \ar[l] & \llap{$Q$} \\
  \bullet \ar[d] & \p \ar[r] &  \m \ar[d] &  \p \ar[l]\ar[r] & 
   \m \ar[d] & \p \ar[l]  \\ 
  \bullet & \m \ar[d]\ar[u] &  \p \ar[l]\ar[r] &
   \m \ar[d]\ar[u] &  \p \ar[l]\ar[r] &
   \m \ar[d]\ar[u]  \\ 
  \bullet \ar[u] & \p \ar[r] &  \m \ar[u] &  \p \ar[l]\ar[r] &
   \m \ar[u] & \p \ar[l] 
   \\ 
  Q' &&& {\hbox to 0pt{$Q\square Q'$}}
   }}
 \end{equation*}
 \caption{Tensor product and square product of quivers.}\label{fig:QtimesQ'}
\end{figure}
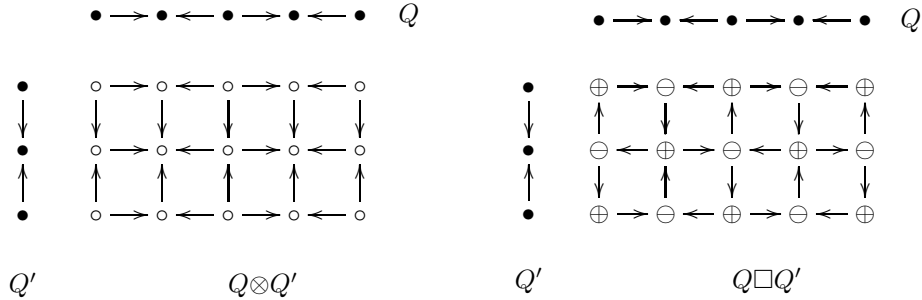

Let $Q, Q'$ be two quivers without oriented cycles, and
$B=(b_{ij}),B'=(b'_{i'j'})$ be the corresponding matrices.  The
\emph{tensor product $Q\otimes Q'$} is defined as follows
\cite{Keller2013} (Figure \ref{fig:QtimesQ'}): the set vertices is the
product $Q_0\times Q_0'$, and the associated matrix is given
by\footnote{There is a natural isomorphism between path algebras
$\C(Q\otimes Q')\simeq \C Q\otimes \C Q'$.}
\begin{equation}
 B(Q\otimes Q')=B(Q)\otimes I_{Q'} + I_{Q} \otimes B(Q')
\end{equation}
where $I_{Q}$, $I_{Q'}$ denotes the identity matrix of size $\# Q_{0}$,
$\#Q'_{0}$, respectively. 
In other words, the number of arrows from a
vertex $(i,i')$ to a vertex $(j,j')$
\begin{itemize}
\item[a)] is zero if $i\neq j$ and $i'\neq j'$;
\item[b)] equals the number of arrows from $j$ to $j'$ if $i=i'$;
\item[c)] equals the number of arrows from $i$ to $i'$ if $j=j'$.
\end{itemize}

Now assume that $Q$ and $Q'$ are alternating, i.e. each vertex is a
source or a sink.  We define the \emph{square product $Q\square Q'$} to
be the quiver obtained from $Q\otimes Q'$ by reversing all arrows in the
full subquivers of the form $\{i\}\times Q'$ and $Q\times \{i'\}$, where
$i$ is a source of $Q$ and $i'$ a sink of $Q'$.\footnote{This orientation
convention is slightly different from \cite{Keller2013}.}  Note that
$Q\otimes Q'$ has no oriented cycles, whereas $Q\square Q'$ is composed
of squares with oriented 4-cycle boundaries.  It is easy to check that
\begin{equation}
 (Q\otimes Q')^{op}= (Q^{op}\otimes Q'^{op}), \qquad 
   (Q^{op}\square Q')= (Q\square Q'^{op})= (Q\square Q')^{op}. 
\end{equation}

In the remainder of this section, we assume $Q$ and $Q'$ are alternating
quivers whose underlying graphs are of Dynkin diagram of ADE type.  The
vertices of $Q\square Q'$ are partitioned into two subsets: for
$\varepsilon = \pm $, we put
\begin{equation}
 \seq{m}_{\varepsilon} :=\bigl\{\,(i,i')\in Q_{0}\times Q'_{0}~|~
  \sgn(i)\sgn(i')=\varepsilon\bigr\}.
\end{equation}
In Figure \ref{fig:QtimesQ'}, $\seq{m}_{+}$ and $\seq{m}_{-}$
corresponds to vertices $\oplus$ and $\ominus$, respectively.  For each
$\varepsilon$, there is no arrows joining two vertices $v$, $v'$ of
$\seq{m}_{\varepsilon}$ and thus $\mu_{v}\circ \mu_{v'}=\mu_{v'}\circ
\mu_{v}$. 

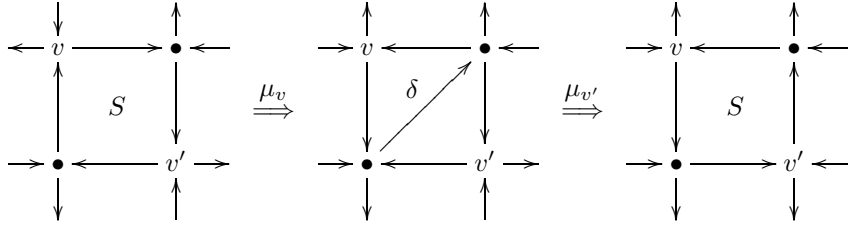
\begin{figure}[bpt]
\begin{equation*}
 \vcenter{
  \xymatrix @R=13pt @C=13pt @M=3pt{
  & \ar[d] & &&\\
  & v \ar[rr] \ar[l]&& \bullet \ar[dd]\ar[u] & \ar[l]\\
 && && \\
 \ar[r]&  \bullet  \ar[uu]\ar[d]
  \ar@{}[uurr]|{\textstyle S} && v' \ar[ll]\ar[r]& \\
  &&&  \ar[u] & \\ 
  }}
   \stackrel{\displaystyle \strut\mu_{v}}{\Longrightarrow}
 \vcenter{
 \xymatrix @R=13pt @C=13pt @M=3pt{
  & & &&\\
 \ar[r]  & v \ar[dd] \ar[u]&& \bullet  \ar[ll]\ar[dd]\ar[u] & \ar[l]\\
 &&&& \\
 \ar[r]&  \bullet  \ar[d]\ar[uurr]^{\textstyle \delta}&& v' \ar[ll]\ar[r]& \\
  & && \ar[u] & \\ 
  }}
   \stackrel{\displaystyle \strut\mu_{v'}}{\Longrightarrow}
 \vcenter{
 \xymatrix @R=13pt @C=13pt @M=3pt{
  & & &&\\
 \ar[r]  & v \ar[dd] \ar[u]&& \bullet  \ar[ll]\ar[u] & \ar[l]\\
 &&&& \\
 \ar[r]&  \bullet  \ar[d]\ar[rr]
  \ar@{}[uurr]|{\textstyle S} 
  && v' \ar[uu]\ar[d]& \ar[l]  \\
  & &&& \\ 
  }}
\end{equation*}
\caption{Creation and annihilation of a diagonal edge.} \label{fig:square-vv'}
\end{figure}

The following simple observation will be helpful.  Choose a square $S$
in $Q\square Q'$ and let $v$, $v'$ be the two vertices of $S$ in the
diagonal position (see Figure \ref{fig:square-vv'}). Suppose we perform
two mutations, first at $v$, and later at $v'$. By the mutation rule 2),
the first mutation creates an arrow $\delta$ which is a diagonal of
$S$. But the second mutation at $v'$ eliminates $\delta$ by the mutation
rule 3). As a result of combined mutation $\mu_{v}$ and $\mu_{v'}$, the
diagonal edge $\delta$ disappears, and the orientations of all arrows
bounding $S$ are reversed. Mutations on vertices other than $v$, $v'$
can never create or delete $\delta$.

With this observation in mind, it is easy to check that
\begin{equation*}
 \mu_{\seq{m}_{+}}(Q\square Q')=(Q\square Q')^{op},\qquad
  \mu_{\seq{m}_{-}}\left((Q\square Q')^{op}\right)=Q\square Q'.
\end{equation*}
Consequently, $\gamma = (Q\square Q';\seq{m}_{+}
\seq{m}_{-},\id)$ forms a mutation loop.

\subsection{Partition $q$-series}

We now consider the partition $q$-series for the mutation loop $\gamma =
(Q;\seq{m}_{+} \seq{m}_{-},\id)$.

As in Section \ref{sec:ZI}, every vertex $v$ of $Q\square Q'$ is mutated
exactly once; the $s$-variables before and after the mutation at $v$ are
identified $s_{v}=s'_{v}$ by the boundary condition. Both $s$- and
$k$-variables are thus in one to one correspondence with the vertex set
$Q_{0}\times Q'_{0}$; let $s_{(i,i')}$, $k_{(i,i')}$ be the $s$-,
$k$-variable associated with the vertex $(i,i')$, respectively. It is
useful to regard $\seq{s}=(s_{(i,i')})$ and $\seq{k}=(k_{(i,i')})$ as
column vectors indexed by the set $Q_{0}\times Q'_{0}$; we will use
lexicographic ordering, if necessary.

The main result of this section is the next
\begin{theorem}\label{thm:Z(Q,Q')-multisum} Let $Q$, $Q'$ be alternating
 quivers of type $A_{n}$, $D_{n}$ or $E_{n}$ with Cartan matrices
 $C_{Q}$, $C_{Q'}$, respectively.  Let $ \gamma = (Q\square Q'; \seq{m}_{+}
 \seq{m}_{-},\id)$ be the mutation loop described above. Then the
 partition $q$-series $Z(\gamma)$ has a following form:
 \begin{equation}
  \label{eq:Z(m-square)-S} Z(\gamma) =\sum_{\seq{k}\geq 0}
   \frac{q^{\frac{1}{2}\left( C_{Q}\otimes
   C_{Q'}^{-1}\right)[\seq{k}]}}{ (q)_{\seq{k}}}.
 \end{equation}
 The $s$- and
 $k$-variables are related as
 \begin{equation}
  \seq{k}=(I_{Q}\otimes C_{Q'})\seq{s}, \qquad \seq{s}=(I_{Q}\otimes
  C_{Q'}^{-1})\seq{k},
 \end{equation}
 where $I_{Q}$ is the identity matrix of size $\#Q_{0}$. 
\end{theorem}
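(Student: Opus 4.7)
The plan is to mirror the proof of Theorem~\ref{thm:Z-multisum}, but to exploit the $k^{\vee}$-variables of Remark~\ref{rem:k-vee} (as hinted in that remark). Since every vertex $v=(i,i')$ of $Q\square Q'$ is mutated exactly once along $\seq{m}_{+}\seq{m}_{-}$, and the boundary condition identifies $s_{v}=s'_{v}$, both the $s$- and $k$-variables are naturally indexed by $Q_{0}\times Q'_{0}$.

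First I would analyse the local arrow structure at each vertex. Unpacking the definition of $Q\otimes Q'$ together with the reversal rule that produces $Q\square Q'$, one checks that at $v=(i,i')$ with $\sgn(i)\sgn(i')=+1$ the horizontal arrows (those along $Q\times\{i'\}$) leave $v$ and the vertical arrows (those along $\{i\}\times Q'$) enter $v$, while at a vertex with $\sgn(i)\sgn(i')=-1$ the pattern is exchanged. Because $\mu_{\seq{m}_{+}}(Q\square Q')=(Q\square Q')^{op}$, the vertices of $\seq{m}_{-}$ are mutated in the opposite quiver, in which the same analysis again yields vertical incoming arrows and horizontal outgoing ones. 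Hence \emph{at the moment of its mutation}, every vertex $v=(i,i')$ sees incoming neighbours $(i,j')$ with $j'\sim i'$ in $\underline{Q'}$, and outgoing neighbours $(j,i')$ with $j\sim i$ in $\underline{Q}$.

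From this uniform description the defining relations read
\begin{equation*}
 k_{v}=2s_{v}-\sum_{j'\sim i'}s_{(i,j')},\qquad
 k^{\vee}_{v}=2s_{v}-\sum_{j\sim i}s_{(j,i')},
\end{equation*}
which, using that $Q$ and $Q'$ are simply-laced Dynkin, amount in matrix form to $\seq{k}=(I_{Q}\otimes C_{Q'})\seq{s}$ and $\seq{k}^{\vee}=(C_{Q}\otimes I_{Q'})\seq{s}$. In particular $\gamma$ is nondegenerate, with $\seq{s}=(I_{Q}\otimes C_{Q'}^{-1})\seq{k}$. Multiplying the local factors~\eqref{eq:W-def-2} yields
\begin{equation*}
 W(\gamma)=\prod_{v}\frac{q^{\frac{1}{2}k_{v}k^{\vee}_{v}}}{(q)_{k_{v}}}=\frac{q^{\frac{1}{2}\seq{k}^{T}\seq{k}^{\vee}}}{(q)_{\seq{k}}},
\end{equation*}
and substituting $\seq{k}^{\vee}=(C_{Q}\otimes I_{Q'})(I_{Q}\otimes C_{Q'}^{-1})\seq{k}=(C_{Q}\otimes C_{Q'}^{-1})\seq{k}$ rewrites the exponent as $\tfrac{1}{2}(C_{Q}\otimes C_{Q'}^{-1})[\seq{k}]$ (using that both $C_{Q}$ and $C_{Q'}^{-1}$ are symmetric). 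Summing over $\seq{k}$ then gives~\eqref{eq:Z(m-square)-S}.

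The main obstacle is the local arrow-structure statement in the second paragraph: one has to disentangle the tensor product and the reversal conventions carefully, and then verify that the ``vertical in, horizontal out'' pattern is preserved when one switches from $Q\square Q'$ to $(Q\square Q')^{op}$ in the middle of the sequence. The ``diagonal edge'' observation illustrated in Figure~\ref{fig:square-vv'} is the key input here: a pair of mutations at diagonally opposite vertices of a square merely flips the orientation of the four bounding arrows, so no new arrows incident to other vertices are ever created, and the horizontal/vertical split persists throughout the process.
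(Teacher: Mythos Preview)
Your proposal is correct and follows essentially the same approach as the paper's own proof: identify the ``vertical in, horizontal out'' pattern at each mutating vertex, translate it into $\seq{k}=(I_Q\otimes C_{Q'})\seq{s}$ and $\seq{k}^{\vee}=(C_Q\otimes I_{Q'})\seq{s}$, and then combine the local weights via \eqref{eq:W-def-2} and the tensor identity $(C_Q\otimes I_{Q'})(I_Q\otimes C_{Q'}^{-1})=C_Q\otimes C_{Q'}^{-1}$. Your additional remark that the diagonal-edge observation of Figure~\ref{fig:square-vv'} underlies the passage to $(Q\square Q')^{op}$ is exactly the justification the paper records just before the theorem statement.
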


\begin{rem}
 The following remarks are in order. 

 The partition $q$-series for the case when $Q$ is type $X$ and $Q'$ is
 type $A_{r-1}$ is of particular interest. Let $L(r\Lambda_{0})$ be the
 vacuum integrable highest weight module of the untwisted affine Lie
 algebra of type $X^{(1)}$.  The structures of various subquotients of
 this module, especially explicit description of their basis, are of
 considerable interest from the viewpoint of mathematical physics, and
 have been extensively studied \cite{Lepowsky1985, Kuniba1993,
 Feigin1993, Terhoeven, Georgiev1, Georgiev2, Stoyanovsky1994,
 Hatayama1998}. The corresponding characters are often referred to as
 fermionic formula or quasi-particle formula.  Precisely the same
 formula as \eqref{eq:Z(m-square)-S} appears in the literature (see for
 example (9) of \cite{Kuniba1993}, (0.5) of \cite{Georgiev2}, or (5.40)
 of \cite{Hatayama1998}). The relation with string functions
 \cite{Kac1984} reveals that when multiplied by a suitable power of $q$,
 $q^{s}Z(\gamma)$ becomes a modular form of some congruence subgroup of
 $SL_{2}(\Z)$.
\end{rem}

Before giving a proof, we illustrate the statement of Theorem
\ref{thm:Z(Q,Q')-multisum} using an example of $A_{3} \square A_{2}$:
\begin{equation*}
A_{3} \square A_{2}=\vcenter{
 \xymatrix @R=6mm @C=6mm @M=4pt{
 1 \ar[r] & 3  \ar[d] & 5\ar[l]\\
 2  \ar[u]& 4 \ar[l] \ar[r]& 6 \ar[u]}
}.
\end{equation*}
Here we enumerate the vertices in the lexicographical order:
\begin{equation*}
 1\leftrightarrow (1,1), \quad2\leftrightarrow (1,2),
 \quad3\leftrightarrow (2,1), \quad4\leftrightarrow (2,2),
 \quad5\leftrightarrow (3,1), \quad6\leftrightarrow (3,2).
\end{equation*}
We consider the mutation loop (see Figure \ref{fig:QQ'})
\begin{equation*}
 \seq{m}_{+}= (1,4,5),\qquad \seq{m}_{-}= (2,3,6), \qquad
  \gamma=(A_{3} \square A_{2};(1, 4, 5, 2, 3, 6),\id).
\end{equation*}

\begin{figure}[bt]
\begin{xy}
 \xygraph{!{<0cm,0cm>;<17mm,0cm>:<0cm,20mm>::}
  !{(3,0)}*++{\vcenter{\xybox{
 \xymatrix @R=5mm @C=5mm @M=2pt{
  1 \ar[r] & 3  \ar[d] & 5\ar[l]\\
  2  \ar[u]& 4 \ar[l] \ar[r]& 6 \ar[u]}
}}}="Q0"
  !{(1,1)}*++{\vcenter{\xybox{
 \xymatrix @R=5mm @C=5mm @M=2pt{
  1 \ar[d] & 3  \ar[d]\ar[l] & 5\ar[l]\\
  2  \ar[ur]& 4 \ar[l] \ar[r]& 6 \ar[u]}
}}}="Q1"
  !{(-1,1)}*++{\vcenter{\xybox{
\xymatrix @R=5mm @C=5mm @M=2pt{
  1 \ar[d] & 3 \ar[l] \ar[dr] & 5\ar[l]\\
  2  \ar[r]& 4 \ar[u] & 6 \ar[u]\ar[l]}
}}}="Q2"
  !{(-3,0)}*++{\vcenter{\xybox{
\xymatrix @R=5mm @C=5mm @M=2pt{
  1 \ar[d] & 3  \ar[l]  \ar[r] & 5\ar[d]\\
  2  \ar[r]& 4 \ar[u] & 6 \ar[l]}
}}}="Q3"
  !{(-1,-1)}*++{\vcenter{\xybox{
\xymatrix @R=5mm @C=5mm @M=2pt{
  1 \ar[dr] & 3  \ar[l]\ar[r] & 5\ar[d]\\
  2  \ar[u]& 4 \ar[l] \ar[u]& 6 \ar[l]}
}}}="Q4"
  !{(1,-1)}*++{\vcenter{\xybox{
\xymatrix @R=5mm @C=5mm @M=2pt{
  1 \ar[r] & 3  \ar[d] & 5\ar[l]\ar[d]\\
  2  \ar[u]& 4 \ar[l] \ar[ru]& 6 \ar[l]}
}}}="Q5"
"Q0":@{=>}_{\textstyle \mu_{1}}"Q1"
"Q1":@{=>}_{\textstyle \mu_{4}}"Q2"
"Q2":@{=>}_{\textstyle \mu_{5}}"Q3"
"Q3":@{=>}_{\textstyle \mu_{2}}"Q4"
"Q4":@{=>}_{\textstyle \mu_{3}}"Q5"
"Q5":@{=>}_{\textstyle \mu_{6}}"Q0"
}
\end{xy}
 \caption{The mutation loop $\gamma=(A_{3}\square
 A_{2};\seq{m}_{+}\seq{m}_{-},\id)$. Here $\seq{m}_{+}=(1,4,5)$,
 $\seq{m}_{-}=(2,3,6)$.}\label{fig:QQ'}
\end{figure}

By the boundary condition, $s$-variables before and after mutation are
identified vertex-wise.  The linear relations between $k$- and
$s$-variables are
\begin{equation}
\begin{split}
&  k_1=2 s_1-s_2,\quad k_4=2 s_4-s_3,\quad k_5=2 s_5-s_6,
\\
& k_2=2 s_2-s_1,\quad k_3=2 s_3-s_4,\quad k_6=2  s_6-s_5.
\end{split}
\end{equation}
This may be written as
\begin{equation*}
 \arraycolsep=2pt
  \left(
   \begin{array}{c}
    k_1 \\
    k_2 \\
    k_3 \\
    k_4 \\
    k_5 \\
    k_6
   \end{array}
       \right)
  =
 \left(
\begin{array}{rrrrrr}
 2 & -1 & 0 & 0 & 0 & 0 \\
 -1 & 2 & 0 & 0 & 0 & 0 \\
 0 & 0 & 2 & -1 & 0 & 0 \\
 0 & 0 & -1 & 2 & 0 & 0 \\
 0 & 0 & 0 & 0 & 2 & -1 \\
 0 & 0 & 0 & 0 & -1 & 2
\end{array}
\right)
\left(
\begin{array}{c}
 s_1 \\
 s_2 \\
 s_3 \\
 s_4 \\
 s_5 \\
 s_6
\end{array}
\right)
\end{equation*}
or more compactly,
\begin{equation}
 \label{eq:ks-QQ'}
 \seq{k}= \left( I_{3}\otimes C_{A_{2}}\right) \seq{s}.
\end{equation}
The weight is given by
\begin{equation}
 \label{eq:W-QQ'}
 \begin{split}
  W(\gamma)& =
  \frac{q^{\frac{1}{2} \left(2 s_1-s_2\right) 
  \left(2 s_1-s_3\right)}}{(q)_{2 s_1-s_2}} 
  \frac{q^{\frac{1}{2} \left(2 s_4-s_3\right) 
  \left(2 s_4-s_2-s_6\right)}}{(q)_{2 s_4-s_3}} 
  \frac{q^{\frac{1}{2} \left(2 s_5-s_3\right)
  \left(2 s_5-s_6\right)}}{(q)_{2 s_5-s_6}} 
  \\
  &\quad \times 
  \frac{q^{\frac{1}{2} \left(2 s_2-s_1\right) \left(2
  s_2-s_4\right)}}{(q)_{2 s_2-s_1}} 
  \frac{q^{\frac{1}{2}
   \left(2 s_3-s_4\right) \left(2 s_3-s_1-s_5\right)}}{(q)_{2 s_3-s_4}} 
  \frac{q^{\frac{1}{2} \left(2 s_6-s_4\right) 
  \left(2  s_6-s_5\right)}}{(q)_{2 s_6-s_5}}.
\end{split}
\end{equation}
The numerator of \eqref{eq:W-QQ'} is of the form
$q^{\frac{1}{2}C[\seq{s}]}$, where $C[\seq{s}]$ is a quadratic form
defined by the positive definite symmetric matrix
\begin{equation*}
C= 
\arraycolsep=2pt
\left(
\begin{array}{rrrrrr}
 4 & -2 & -2 & 1 & 0 & 0 \\
 -2 & 4 & 1 & -2 & 0 & 0 \\
 -2 & 1 & 4 & -2 & -2 & 1 \\
 1 & -2 & -2 & 4 & 1 & -2 \\
 0 & 0 & -2 & 1 & 4 & -2 \\
 0 & 0 & 1 & -2 & -2 & 4
\end{array}
\right)
=
\left(
\begin{array}{rrr}
 2 & -1 & 0 \\
 -1 & 2 & -1 \\
 0 & -1 & 2
\end{array}
\right)
\otimes 
\left(
\begin{array}{rr}
 2 & -1 \\
 -1 & 2
\end{array}
\right)
=C_{A_{3}}\otimes C_{A_{2}}.
\end{equation*}
We have $\seq{s}= (I_{3}\otimes C_{A_{2}}^{-1}) \seq{k}$ by inverting
the relation \eqref{eq:ks-QQ'}. Substituting this into \eqref{eq:W-QQ'},
we obtain the partition $q$-series:
\begin{equation*}
  Z(\gamma) = \sum_{k_{1},k_{2},k_{3},k_{4},k_{5},k_{6}=0}^{\infty}
   \frac{q^{\frac{1}{2}D[\seq{k}] }}{(q)_{k_1} (q)_{k_2} (q)_{k_3}
   (q)_{k_4} (q)_{k_5} (q)_{k_6}}
\end{equation*}
where
\begin{equation*}
\arraycolsep=2pt
 D=\frac{1}{3}
\left(
\begin{array}{rrrrrr}
 4 & 2 & -2 & -1 & 0 & 0 \\
 2 & 4 & -1 & -2 & 0 & 0 \\
 -2 & -1 & 4 & 2 & -2 & -1 \\
 -1 & -2 & 2 & 4 & -1 & -2 \\
 0 & 0 & -2 & -1 & 4 & 2 \\
 0 & 0 & -1 & -2 & 2 & 4
\end{array}
\right)
= 
\left(
\begin{array}{rrrrrr}
 2 & -1 & 0 \\
 -1 & 2 & -1 \\
 0 & -1 & 2
\end{array}
\right)
\otimes 
\left(
{\arraycolsep=4pt
\begin{array}{cc}
 \frac{2}{3} & \frac{1}{3} \\[5pt]
 \frac{1}{3} & \frac{2}{3}
\end{array}
}
\right)
= C_{A_{3}}\otimes (C_{A_{2}})^{-1}.
\end{equation*}

\begin{proof} of Theorem \ref{thm:Z(Q,Q')-multisum}:

First consider the sequence of mutations $\seq{m}_{+}$ applied to
$Q\square Q'$.  Pick a vertex $v= (i,i') \in \seq{m}_{+}$ (marked with
$\oplus$ in Figure \ref{fig:QtimesQ'}.) Then by the very definition of
$Q\square Q'$, every incoming arrow $\alpha$ to $v$ comes from
``vertical'' directions; $\alpha$ is of the form $(i,j')\to (i,i')$
where $j'$ is adjacent to $i'$ in the underlying graph
$\underline{Q}'$. This means that the $k$- and $s$-variables are
related as
\begin{equation}
 \label{eq:ks-rel-QQ}
 k_{(i,i')} = 2s_{(i,i')}-\sum_{i'\sim j' \in \underline{Q}'} s_{(i,j')}
\end{equation}

Next we take up the mutation sequence $\seq{m}_{-}$ (marked with
$\ominus$ in Figure \ref{fig:QtimesQ'}). Since the mutation
$\seq{m}_{-}$ is applied only after $\seq{m}_{+}$ is over, it is
convenient to consider $\mu_{\seq{m}_{+}}(Q\square Q')=(Q\square
Q')^{op}$ as the initial quiver. Then the connections around the
mutating vertex is exactly the same as before: all incoming arrows again
come from ``vertical'' directions. Thus, \eqref{eq:ks-rel-QQ} is true
for $v=(i,i')\in \seq{m}_{-}$ as well.  Thus we have the relation
\begin{equation}
 \label{eq:QQks}
 \seq{k} = (I_{Q}\otimes C_{Q'}) \seq{s}. 
\end{equation}
Since $Q'$ is of $ADE$ type, $C_{Q'}$ is a positive definite symmetric
matrix. Thus the linear relation \eqref{eq:QQks} is invertible:
\begin{equation}
  \seq{s} = (I_{Q}\otimes C_{Q'}^{-1}) \seq{k}. 
\end{equation}
In particular, the mutation loop $\gamma$ is nondegenerate.

We have seen that all incoming arrows to the mutating vertices run
``vertically.''  This means that all outgoing arrows run
``horizontally.'' Therefore, $k^{\vee}$-variables, which are introduced
in \eqref{eq:kv-and-s}, are related with $s$-variables as
\begin{equation}
 \label{eq:kvs-rel-QQ}
 k^{\vee}_{(i,i')} = 2s_{(i,i')}-\sum_{i \sim j \in \underline{Q}} s_{(j,i')},
\end{equation}
or equivalently, 
\begin{equation}
 \label{eq:kvs-rel} 
 \seq{k}^{\vee} = (C_{Q}\otimes I_{Q'}) \seq{s}. 
\end{equation}
The weight of the whole mutation sequence is then
\begin{equation}
 \label{eq:W(m-square)}
 W(\gamma)=\prod_{v}\frac{q^{\frac{1}{2}k_{v}k_{v}^{\vee}}}{(q)_{k_{v}}}
  = \frac{q^{\frac{1}{2}\sum_{v} k_{v}k_{v}^{\vee}}}{(q)_{\seq{k}}}.
\end{equation} 
Note that the sum in the numerator is written as 
\begin{equation}
 \label{eq:QQ'qexp2}
 \begin{split}
  \sum_{\smash{v}}k_{v}k_{v}^{\vee} &= \seq{k}^{T} \seq{k}^{\vee} 
   = \seq{k}^{T} \,
   (C_{Q}\otimes I_{Q'}) \seq{s}
  \\[-1em]
  & = 
  \seq{k}^{T}\,  (C_{Q}\otimes I_{Q'}) (I_{Q}\otimes C_{Q'}^{-1})\seq{k}
  = 
  \seq{k}^{T}\, (C_{Q}\otimes C_{Q'}^{-1}) \seq{k}
  \\
  & = (C_{Q}\otimes C_{Q'}^{-1}) [\seq{k}].
 \end{split}
\end{equation}
Combining \eqref{eq:W(m-square)}, \eqref{eq:QQ'qexp2},
and summing over the $k$-variables, we obtain the desired formula
\eqref{eq:Z(m-square)-S}.

\end{proof}
\begin{rem}
 With the same initial quiver $Q\square Q'$, we can construct another
 mutation loop $\gamma'=(Q\square Q'; \seq{m}_{-}\seq{m}_{+},\id)$ by exchanging
 $\seq{m}_{+}$ and $\seq{m}_{-}$. Analysis similar to that in the proof of
 Theorem \ref{thm:Z(Q,Q')-multisum} show that $Q$ and $Q'$ exchange
 their roles; the partition $q$-series is now given by
 \begin{equation}
  \label{eq:Z(gamma')} Z(\gamma')
   =\sum_{\seq{k}\geq 0} \frac
  {q^{\frac{1}{2}\left( C_{Q}^{-1}\otimes C_{Q'}\right)[\seq{k}]}}{
  (q)_{\seq{k}}}.
 \end{equation}

\end{rem}


\end{document}